\newcommand{\eqdef}{\ensuremath{:\!=}}
\renewcommand{\vec}[1]{\mathbf{#1}}
\newcommand{\dunion}{\sqcup}
\declaretheorem[name=Theorem, numberwithin=section]{thm}
\declaretheorem[name=Definition,style=definition,sibling=thm]{defi}
\declaretheorem[name=Lemma, sibling=thm]{lem}
\declaretheorem[name=Claim, sibling=thm]{claim}
\declaretheorem[name=Corollary, sibling=thm]{crl}
\declaretheorem[name=Remark,style=remark,numbered=no]{rk}
\declaretheorem[name=Example,style=remark,sibling=thm]{eg}
\declaretheorem[name=Conjecture,style=definition,numbered=no]{conjecture}
\newcommand{\stt}[2]{\ensuremath{\left\{#1\,:\,#2\right\}}}
\newcommand{\crd}[1]{\ensuremath{\left|#1\right|}}
\newcommand{\F}{\mathbb{F}}
\DeclareMathOperator{\codim}{codim}
\DeclareMathOperator{\chr}{char}
\DeclareMathOperator{\spn}{span}
\newenvironment{mx}{\left(\begin{matrix}}{\end{matrix}\right)}
\DeclareMathOperator{\per}{per}
\DeclareMathOperator{\poly}{poly}
\DeclareMathOperator{\Per}{Per}
\DeclareMathOperator{\rank}{rank}
\DeclareMathOperator{\pk}{perrank}
\newcommand{\prk}[1]{\pk(#1)}
\newcommand{\tr}[1]{{#1}^{T}}
\newcommand{\frnd}{jointly-permanull\xspace} 
\newcommand{\vgood}{permanull\xspace} 
\newcommand{\SCAZ}{\mathsf{SC0}}
\newcommand{\WS}{\mathsf{WS}}
\newcommand{\ellstar}{{\ell^{*}}}
\begin{document}
\title[Permanental ranks over a finite field]{Permanental rank versus determinantal rank of random matrices over finite fields}

\author{Fatemeh Ghasemi}
\address{Fatemeh Ghasemi, Department of Mathematics, University of Toronto}
\email{fatemeh.ghasemi@mail.utoronto.ca}

\author{Gal Gross}
\address{Gal Gross, Department of Mathematics, University of Toronto}
\email{g.gross@mail.utoronto.ca}

\author{Swastik Kopparty}
\address{Swastik Kopparty, Department of Mathematics and Department of Computer Science, University of Toronto}
\email{swastik.kopparty@utoronto.ca}
\thanks{Research supported by an NSERC Discovery grant.}

\date{\today}

\begin{abstract}
This paper is motivated by basic complexity and probability questions about permanents of random matrices over small finite fields, and in particular, about properties separating the permanent and the determinant. 

Fix $q = p^m$ some power of an odd prime, and let $k \leq n$ both be growing.
For a uniformly random $n \times k$ matrix $A$ over $\F_q$, we study the probability that all $k \times k$ submatrices of $A$ have zero permanent; namely that $A$ does not have full {\em permanental rank}.

When $k = n$, this is simply the probability that a random square matrix over $\F_q$ has zero permanent, which we do not understand. We believe that the probability in this case is $\frac{1}{q} + o(1)$, which would be in contrast to the case of the determinant, where the answer is $\frac{1}{q} + \Omega_q(1)$.

Our main result is that when $k$ is $O(\sqrt{n})$, the probability that a random $n \times k$ matrix does not have full permanental rank is essentially the same as the probability that the matrix has a $0$ column, namely $(1 +o(1)) \frac{k}{q^n}$. In contrast, for determinantal (standard) rank the analogous probability is $\Theta(\frac{q^k}{q^n})$.

At the core of our result are some basic linear algebraic properties of the permanent that distinguish it from the determinant.

\end{abstract}

\subjclass[2020]{Primary 15A15; Secondary 15B33, 15B52.}

\keywords{Permanent, random matrices over a finite field}

\maketitle

\section{Introduction}\label{sec:motivation}

Like the determinant, the permanent of an $n\times n$ matrix $A = [a_{ij}] \in M_{n}(\F)$ is defined as a sum over permutations:
$$\per A := \sum_{\sigma \in S_n}\prod_{i=1}^{n}a_{i\sigma(i)}.$$

Despite their apparent similarity, the permanent function and the determinant function conjecturally lie at vastly different locations in the computational landscape. It is thus of great interest to prove unconditional separations between their properties.

Unlike the determinant, which can be computed efficiently (e.g., via Gaussian elimination), the permanent is notoriously difficult to compute.\footnote{Unless we are in characteristic $2$, in which case the permanent equals the determinant.} This has been vaguely understood for a long time, starting with P\'olya's question \cite{Pol13} about the impossibility of reducing the computation of a permanent to that of a determinant. More recently, Valiant \cite{Val79} defined the counting complexity class $\#\mathsf{P}$, proved that this complexity class is at least as hard as $\mathsf{NP}$, and that the permanent of a matrix with entries in $\{0, 1\}$ is $\#\mathsf{P}$-complete. He also showed that over fields of odd characteristic $p$, the permanent is complete for the modular counting class $\mathsf{MOD}_p\mathsf{P}$.

In a separate line of work, Valiant \cite{Val79b} defined algebraic analogues of the $\mathsf{P}$ and $\mathsf{NP}$ complexity classes---called $\mathsf{VP}$ and $\mathsf{VNP}$---and proved that computing the permanent is $\mathsf{VNP}$-complete. In contrast, the determinant lies in $\mathsf{VP}$ (and is complete for the further subclass $\mathsf{VBP}$).
This makes the permanent one of the main candidates for an explicit polynomial which may separate $\mathsf{VP}$ from $\mathsf{VNP}$ and so it has been the subject of intense research in algebraic complexity. In particular, this has been studied under the title of ``permanent vs.~determinant", seeking to understand the intrinsic differences between the permanent polynomial and the determinant polynomial.

The permanent has also played an important role in average case complexity. Over fields of size $\poly(n)$, building on the work of Beaver-Feigenbaum~\cite{BF90}, Lipton~\cite{LIP91} showed the random self reducibility of permanents under the uniform distribution. This implies that if there is no algorithm that efficiently computes the permanent of arbitrary $n \times n$ matrices (this is widely believed, since the permanent over $\F_p$ for $p$ prime is a $\mathsf{MOD}_p\mathsf{P}$-hard problem), then there is no algorithm that can efficiently compute the permanent even if it is allowed to err on some $\frac{1}{\poly(n)}$ fraction of $n \times n$ matrices. Using error-correcting properties of low-degree polynomials over large fields, the size of this fraction was improved by Gemmel-Lipton-Rubinfeld-Sudan-Wigderson~\cite{Gem+91} (to $\Omega(1)$), Gemmel-Sudan~\cite{GS92} (to $1/2 - o(1)$), and Cai-Pavan-Sivakumar~\cite{CPS99} (to $1-o(1)$)\footnote{Average case hardness results under stronger assumptions were given by Feige-Lund~\cite{FL96}. The list version of the question was proved hard by Cai-Hemachandra~\cite{CH91}.}. While these results do not hold over smaller fields, Feigenbaum-Fortnow~\cite{FF93} showed that for fields $\F_q$ with $q$ a fixed prime power, there is a nonuniform distribution under which this kind of average case hardness holds. Today it is not known whether for a fixed prime $p$, computing the permanent for $99\%$ of all $n \times n$ matrices over $\F_p$ is $\mathsf{MOD}_p\mathsf{P}$-hard. This gives further motivation for studying the distribution of the permanent of a uniformly random matrix over $\F_p$ which, surprisingly, is not well understood.

If the entries of an $n\times n$ matrix $A$ are chosen uniformly at random from a finite field $\F_q$ of characteristic $p$ and order $q = p^m$, it is well known that the probability of $\det A$ being $0$ is
\[1-\prod_{k=0}^{n-1}\left(1-\frac{q^k}{q^n}\right) = \frac{1}{q}+\Omega(q^{-2}).\]
A simple argument shows that the exact same expression is also an upper bound for the probability that $\per A$ is $0$, and other than a $(1 +o(1))$ factor improvement~\cite{BG12}, this is the best known upper bound.  We believe, however, that something much stronger is true; that the permanent behaves like a random polynomial.

\begin{conjecture}
Let $p$ be an odd prime and $q = p^m$ some fixed power. Let $A$ be a uniformly random $n \times n$ matrix with entries in $\F_q$. Then:
\[\Pr[\per A = 0 ] = \frac{1}{q} + o(1).\]
\end{conjecture}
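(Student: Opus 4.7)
The plan is to combine a Fourier-inversion reduction with an iterated row-expansion of the permanent. Fixing a nontrivial additive character $\psi$ of $\F_q$, the standard indicator identity gives
\[
\Pr[\per A = 0] = \frac{1}{q} + \frac{1}{q}\sum_{t \in \F_q^*} \mathbb{E}_A[\psi(t\, \per A)],
\]
so the conjecture reduces to showing $\mathbb{E}_A[\psi(t\, \per A)] = o(1)$ uniformly in $t \neq 0$. Fixing such $t$ and summing over the last row $r$ of $A$ first, the expansion $\per A = \langle r, c(B)\rangle$---where $B$ is the top $(n-1) \times n$ block of $A$ and $c(B)_i$ is the permanent of $B$ with column $i$ deleted---shows that the sum over $r$ vanishes unless $c(B) = 0$. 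Hence
\[
\mathbb{E}_A[\psi(t\, \per A)] = \Pr_B[c(B) = 0]
\]
independently of $t$, and the conjecture becomes the statement that a random $(n-1) \times n$ matrix has full permanental rank with probability $1 - o(1)$.

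I would then iterate the row expansion once more. Let $B'$ be the top $(n-2) \times n$ block of $B$ and $b$ its final row. A direct computation gives
\[
c_i(B) = \sum_{j \neq i} b_j\, P_{ij}(B'), \qquad P_{ij}(B') := \per\bigl(B' \text{ with columns } i \text{ and } j \text{ removed}\bigr),
\]
so $c(B) = P(B')\, b$ for the $n \times n$ symmetric zero-diagonal matrix $P(B')$ whose off-diagonal entries are $(n-2) \times (n-2)$ permanents. Averaging over $b$ converts everything into a rank question:
\[
\Pr_B[c(B) = 0] = \mathbb{E}_{B'}\bigl[q^{-\rank P(B')}\bigr],
\]
so it suffices to prove that $\rank P(B') \to \infty$ in probability as $n \to \infty$.

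The attack on this rank bound would stratify the variety $\{B' : \rank P(B') \leq r\}$ and exploit the structural feature that $P_{ij}(B')$ does not depend on columns $i$ or $j$ of $B'$ at all. A natural route is inductive on the number of rows of $B'$: feed the rows in one at a time and try to show that each new row increases $\rank P(B')$ by a positive amount in expectation, using swap arguments on pairs of columns together with the multilinearity of the permanent. The paper's results in the range $k = O(\sqrt{n})$ should provide a rich set of structural permanent identities to serve as a base for such an induction.

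The hard part---and the reason this is only a conjecture---lies precisely here. The entries of $P(B')$ are degree-$(n-2)$ polynomials in a shared set of $n(n-2)$ variables, strongly correlated; none of the standard tools (Weil bounds, Lang--Weil estimates, or the direct union-bound arguments that work for small $k$) seem to apply in the extreme range $k = n-1$ demanded by the reduction, which is exactly where the paper's current techniques fall short. A successful proof will plausibly require a new structural identity for $P(B')$---for instance, showing that generic perturbations of the rows of $B'$ realize any prescribed rank for $P(B')$---or else a new estimate on the number of low-permanental-rank wide matrices that bypasses the current $O(\sqrt{n})$ threshold.
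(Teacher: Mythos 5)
The statement you are working on is posed in the paper only as a conjecture; the paper gives no proof of it, and your proposal does not supply one either. Your two reduction steps are, however, correct. The character sum over the last row recovers exactly the identity the paper itself records, namely $\Pr[\per A = 0] = \frac{1}{q} + \bigl(1-\frac{1}{q}\bigr)\Pr[\prk{B} < n-1]$ for $B$ the remaining $(n-1)\times n$ block, so your first reduction is the paper's own observation in Fourier-analytic clothing. The second expansion correctly gives $c(B) = P(B')\,b$ with $P(B')$ symmetric with zero diagonal and off-diagonal entries equal to $(n-2)\times(n-2)$ subpermanents, hence $\Pr_B[c(B)=0] = \mathbb{E}_{B'}\bigl[q^{-\rank P(B')}\bigr]$, and the conjecture would indeed follow from $\rank P(B') \to \infty$ in probability.

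The genuine gap is that this last claim --- the only nontrivial step --- is left entirely unproved, and it is at least as hard as the statement you started from. Already the event $\rank P(B') \geq 1$ says precisely that the $(n-2)\times n$ matrix $B'$ has full permanental rank, which is the same kind of assertion, one level down, as the $\Pr[\prk{B} < n-1] = o(1)$ you are trying to establish; the paper's Theorem \ref{thm:main} controls such events only for $O(\sqrt{n})$ columns, nowhere near $n-1$ or $n-2$. The strategies you sketch (feeding in rows one at a time, swap arguments, stratifying the low-rank locus) are directions rather than arguments: no mechanism is given for why adding a row should increase $\rank P(B')$, and the entries of $P(B')$ are themselves large permanents whose vanishing behaviour is exactly the unknown quantity. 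What you have is a correct pair of equivalent reformulations of the conjecture (the second of which is a potentially useful strengthening to aim for), not a proof.
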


Since the permanent function enjoys a cofactor expansion formula similar to that of the determinant, one simple way for the permanent to vanish is if each of its cofactors vanishes. This observation naturally leads to the consideration of the \emph{permanental rank}. First introduced by Yu \cite{Yu99} in connection with the Alon-Jaeger-Tarsi conjecture \cite{AT89}, the \emph{permanental rank} $\prk{A}$ of an $n\times k$ matrix $A$ is the largest $r$ such that $A$ has an $r\times r$ submatrix with nonzero permanent. Note that if `nonzero permanent' is replaced with `nonzero determinant', the result is an equivalent definition of the usual notion of the rank of a matrix, $\rank A$. For $k = n$ we have $\per A = 0 \iff \prk{A} < k$, i.e. if $A$ is of non-full permanental rank. 

We have the simple observation, based on cofactor expansion, that if $A$ is a uniformly random $n \times n$ matrix, and $B$ is a uniformly random $n \times (n-1)$ matrix, then:
$$\Pr[\per A = 0] = \frac{1}{q} + \left( 1- \frac{1}{q}\right)\cdot \Pr[\prk{B} < n-1].$$
This relates the conjecture to the problem of understanding the probability that the rectangular matrix $B$ has non-full permanental rank: the above conjecture is equivalent to the statement that $\Pr[\prk{B} < n-1] = o(1)$.

\subsection{Main Result}
In this paper we prove bounds on the probability that a random $n\times k$ matrix with $k \ll n$ has non-full permanental rank, i.e., the probability that all $k\times k$ submatrices have zero permanents. 

The trivial way such an $n\times k$ matrix $A$ may fail to have full permanental rank is if it has a $0$-column. Viewing the columns as random vectors in $\F_q^n$, the probability of a zero column is easily seen to be $kq^{-n} + O((kq^{-n})^2)$. Therefore, the probability $Z$ of $\prk{A}<k$ is at least as large:
$$\Pr[Z] \geq \frac{k}{q^{n}}+o(q^{-n}).$$
Our main result, Theorem \ref{thm:main}, proves a matching upper bound for $k = O(\sqrt{n})$:
$$\Pr[Z] = \frac{k}{q^{n}}+o(q^{-n}).$$
Thus for $k = O(\sqrt{n})$ the only significant obstruction to full permanental rank is the trivial one of a zero column.

For determinants, we note the probability that a random $n \times k $ matrix has non-full determinantal (i.e. standard) rank is about $\Theta_q(\frac{q^k}{q^n})$ --- it is within a constant factor of the probability that the last column lies in the span of the first $k-1$ columns. This establishes a separation in the behaviours of the permanental rank and the (determinantal) rank. We hope that our ideas can lead to a proof of our conjecture above.

\subsection{Techniques}
Our probabilistic argument relies on novel structural results. We introduce the notion of a \emph{\vgood subspace}, which is a subspace $S \leq \F_q^n$ such that any $n\times n$ matrix $A$ whose columns lie in $S$ has vanishing permanent, $\per A = 0$. Theorem \ref{thm:c1vgood} shows that the only codimension-1 \vgood subspaces are the trivial ones $\{\vec{e}_i\}^\perp$. Generalizing this notion, we call a list of $n$-subspaces $S_1, \ldots, S_n$ \emph{\frnd} if every $n\times n$ matrix $A$ such that the $i$-th column of $A$ is an element of $S_i$ has $\per A = 0$.\footnote{Permuting the columns of $A$ does not affect the vanishing of the permanent, so one could consider $n$-\emph{multisets} of subspaces instead. We opted for lists for ease of readability.} Our main structural result, Theorem \ref{thm:manyfriends}, generalizes Theorem \ref{thm:c1vgood} by showing that if $n \geq 3$ and each $S_i$ has codimension at most $1$ then $S_1 = \cdots = S_n = \{\vec{e}_i\}^\perp$.

Both these facts do not hold for the determinant. This is the key source of how our results distinguish between the permanent and the determinant. Understanding when spaces $S_1, \ldots, S_n$ of codimension $2$ or larger are \frnd is very interesting, and seems highly relevant for relaxing our requirement that $k \leq O(\sqrt{n})$. In Sections \ref{sec:0fld} and \ref{sec:poly} we make some small steps in this direction.

\subsection{Other related work}
For characteristics $p\geq 3$, Dolinar et al.~\cite{DGKO} have shown that for a sufficiently large order $q$, the probability $\Pr[\per A = 0]$ is strictly less than $\Pr[\det A = 0]$ by recursively computing upper bounds for the number of matrices with vanishing permanents; Budrevich and Guterman \cite{BG12} removed the condition that the order of the field be sufficiently large. Several papers improved the upper bounds on the number of matrices with vanishing permanents, the most recent being Budrevich's paper \cite{Bud18} (see also the references therein). 
These results present significant progress in understanding the vanishing behaviour of the permanent, but are not yet sufficient to prove that for fixed $q$ and growing $n$, the difference between the probability that the permanent of a random matrix is zero and the probability that the determinant of a random matrix is zero is  at least a positive constant.

Very recently, Scheinerman~\cite{Scheinerman} developed new methods for computing astonishingly large permanents over the field $\F_3$ on present day computers. He used his algorithms to compute permanents of random $30 \times 30$ matrices over $\F_3$, and found that the probability of zero permanent does seem to be very close to $1/3$.

Over the real numbers, Tao and Vu~\cite{TV09} showed that the permanent of a random $\pm1 $ matrix is almost surely nonzero (and they further determined its magnitude). More recently, Kwan and Sauermann~\cite{KS22} extended these results to symmetric random $\pm 1$ matrices. There seem to be some quantitative forms of permanental rank that appear in these works; it would be interesting to see if there are tools from there that can help resolve our main conjecture.

\section*{Acknowledgements}

SK would like to thank Noga Alon, Jeff Kahn, Danny Scheinerman, Srikanth Srinivasan, and Sergey Yekhanin  for valuable discussions about these topics over the past many years. We thank the anonymous referees for pointers to the literature and for suggestions that improved the presentation.

\section{Results}
In this section we state our main theorem, showing that for $k = O(\sqrt{n})$, a random $n\times k$ matrix has non-full permanental rank  with probability about $k/q^n$. 

\begin{restatable}{thm}{prbthm}
\label{thm:main}
 Let $k, n \in \mathbb N$ with $k \leq 0.1\sqrt{ n}$, .
 
 Let $X$ be a uniformly random $n\times k$ matrix with entries in $\F_q$. Let $Z$ be the event that all $k \times k$ submatrices of $X$ have permanent equal to $0$. Then:
 $$ \Pr[Z] =   \frac{k}{q^n} + O(q^{-1.1n}).$$
\end{restatable}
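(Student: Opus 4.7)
The plan is to match a lower bound of $k/q^n - O(q^{-1.1n})$---given by the event $A$ that some column of $X$ is $\vec 0$, which satisfies $A \subseteq Z$ and (by inclusion-exclusion) $\Pr[A] = k/q^n + O(k^2 q^{-2n})$---with an upper bound of the same form. Since for $k \leq 0.1\sqrt n$ the error $k^2/q^{2n} = O(n q^{-2n}) = O(q^{-1.9n})$ is absorbed into $O(q^{-1.1n})$, writing $\Pr[Z] = \Pr[A] + \Pr[Z \setminus A]$ reduces the task to showing $\Pr[Z \setminus A] = O(q^{-1.1n})$.

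I would reveal the columns $v_1, \ldots, v_k$ sequentially. A cofactor expansion of any $(j+1) \times (j+1)$ permanent along its last column shows that appending a column raises the permanental rank by at most one; hence $T_j = \{\prk{[v_1, \ldots, v_j]} < j\}$ is monotone in $j$ and $Z = T_k$. Let $\tau$ be the smallest index with $T_\tau$; since $\tau = 1$ forces $v_1 = \vec 0 \in A$, we have $\Pr[Z \setminus A] \leq \sum_{j=2}^{k} \Pr[\tau = j,\ v_1, \ldots, v_j \neq \vec 0]$. Conditional on $v_1, \ldots, v_{j-1}$ being nonzero with full permanental rank $j-1$, the ``bad'' set $S_j \leq \F_q^n$ of vectors $v$ such that $[v_1, \ldots, v_{j-1}, v]$ has permanental rank below $j$ is the solution of the $\binom{n}{j}$ linear equations $\sum_{i \in I} v[i] \cdot \beta_{I \setminus \{i\}} = 0$ indexed by $I \in \binom{[n]}{j}$, where $\beta_J = \per([v_1, \ldots, v_{j-1}]_J)$. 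The step-$j$ contribution to $\Pr[Z \setminus A]$ is then $E[(|S_j| - 1) \mathbf{1}_{\mathrm{good}}]/q^n$, and I aim to bound it by $O(q^{-1.2n})$ uniformly in $j$, so that summing over $j \leq k = O(\sqrt n)$ yields $O(q^{-1.1n})$.

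The main obstacle---where the structural results should enter---is bounding $|S_j|$. Let $T \subseteq [n]$ be the set of nonzero rows of $[v_1, \ldots, v_{j-1}]$. Full permanental rank forces some $\beta_J \neq 0$ with $J \subseteq T$, and the equations indexed by $I = J \cup \{r\}$ for $r \notin T$ then collapse to $v[r] \beta_J = 0$, forcing any $v \in S_j$ to vanish outside $T$. The residual constraints are the $\binom{|T|}{j}$ equations coming from $I \subseteq T$, restricting $v|_T \in \F_q^{|T|}$. On the ``concentrated support'' event $|T| \leq j$ one has the trivial bound $|S_j| \leq q^{|T|} \leq q^j$, and the probability of this event is at most $\binom{n}{j} q^{-(j-1)(n-j)}$; the resulting step-$j$ contribution is $O(n^j q^{j^2 - jn})$, which sums to $O(q^{-1.9n})$---well within budget. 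The crux is therefore the opposite regime $|T| \geq j + 1$, where I expect the $\binom{|T|}{j}$ equations on $v|_T$ to have full rank $|T|$ and thus $|S_j| = 1$. A nontrivial $u \in S_j$ would, via Laplace expansion, certify that the list of subspaces $\spn(v_1), \ldots, \spn(v_{j-1}), \spn(u), \F_q^n, \ldots, \F_q^n$ is jointly-permanull, and although these line subspaces sit outside the codimension-$1$ regime of Theorem \ref{thm:manyfriends} directly, I would derive a contradiction---or at least an algebraic degeneracy of low probability---by iterating the codimension-$1$ classification of Theorem \ref{thm:c1vgood} over pairs of rows in $T$ and combining this with Theorem \ref{thm:manyfriends} to control the resulting pattern of forced zero entries. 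This structural step is the technically hardest part of the proof.
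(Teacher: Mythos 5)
Your framing---revealing columns sequentially, the stopping time $\tau$, the reduction to bounding $\mathbb{E}\bigl[(|S_j|-1)\mathbf{1}_{\mathrm{good}}\bigr]/q^n$, the observation that any $v\in S_j$ must vanish off the support $T$, and the treatment of the concentrated-support case $|T|\le j$---is all sound. But the crux of your plan, the claim that $|T|\ge j+1$ forces $S_j=\{\vec 0\}$, is false, and not only for exotic configurations: it already fails with $T=[n]$. Take $j=3$, $n\ge 4$, $v_1=\tr{(1,1,1,\dots,1)}$ and $v_2=\tr{(1,-1,0,\dots,0)}$. Every row of $[v_1\ v_2]$ is nonzero, and $\per\left(\begin{smallmatrix}1&1\\1&0\end{smallmatrix}\right)=1$ gives full permanental rank; yet $u=\tr{(1,1,0,\dots,0)}\in S_3$, since for rows $\{1,2,c\}$ with $c\ge 3$ the $3\times 3$ subpermanent expands along the row $(1,0,0)$ to $\per\left(\begin{smallmatrix}1&1\\-1&1\end{smallmatrix}\right)=0$, and every other $3\times3$ submatrix of $[v_1\ v_2\ u]$ contains at least two copies of the row $(1,0,0)$ and hence has zero permanent. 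So $|S_3|\ge q$ even with full support. (Your $j=2$ case does work, via Lemma~\ref{lem:i2}, but that classification does not propagate to $j\ge 3$.) Repairing your route requires classifying, or probabilistically controlling weighted by $|S_j|$, all full-permanental-rank $n\times(j-1)$ matrices admitting a nonzero extension of deficient permanental rank---which is essentially the problem of classifying permanental-rank-deficient $n\times j$ matrices, i.e., the quantity being estimated, so the column-by-column induction is circular at its hardest step. Neither of the structural theorems you invoke applies to the list you build: Theorem~\ref{thm:manyfriends} needs all $n$ subspaces of $\F_q^n$ to have codimension at most $1$, whereas your $\spn(v_i)$ are lines.

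The paper sidesteps columns entirely and works with rows. If $Z$ holds then for \emph{any} partition $N_1,\dots,N_k$ of the $n$ rows, the spans $S_i=\spn\{\vec{x}_j : j\in N_i\}\le\F_q^k$ form a \frnd list, by multilinearity of the permanent in the rows. A greedy argument shows that with probability $1-O(q^{-1.1n})$ (this is where $k\le 0.1\sqrt n$ enters) the partition can be chosen so that every $\dim S_i\ge k-1$; Theorem~\ref{thm:manyfriends} then forces $S_1=\dots=S_k=\{\vec{e}_i\}^\perp$, i.e., the $i$-th column of $X$ is zero. Thus the structural input is applied once, in the ambient space $\F_q^k$ where the codimension-$\le 1$ hypothesis is attainable, rather than in $\F_q^n$ where your column spans are far from codimension $1$.
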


\begin{rk}
    For much smaller $k$, the error term gets smaller. For example, for $k = o(\sqrt{n})$, we have
    $$\Pr[Z] \leq \frac{k}{q^n} + O(q^{-(2-o(1))n}).$$ The precise relationship is given in the proof of Claim \ref{claim:bigprb} below.
\end{rk}

The main ingredient in our proof is a structural result about the permanent. In order to state the result we need the following definitions.

\begin{defi}
     The list of subspaces $S_1, \ldots, S_n \leq \F^n$ is said to be \emph{\frnd} if every $n\times n$ matrix $A$ whose $i$-th column is a vector in $S_i$ (for $1 \leq i \leq n$) has $\per A = 0$.
\end{defi}

Our main structural result classifies codimension-$1$ \frnd lists.

\begin{restatable*}[Classification of codimension-$1$ \frnd lists]{thm}{manyfriends}
\label{thm:manyfriends}
    Let $n\geq 3$ and $S_1, \ldots, S_n \leq \F_q^n$ be subspaces, each of dimension $\geq n-1$. Then $S_1, \ldots, S_n$ is a \frnd list if and only if $S_1 = S_2 = \cdots = S_n = \{\vec{e}_i\}^\perp$ for some $1 \leq i \leq n$.
\end{restatable*}

Note that the ``if'' direction of the theorem is obvious, the interesting part is the ``only if'' direction. A particularly interesting aspect of this theorem is that it gives a separation between the permanent and the determinant: $n$ subspaces in $\F^n$ of codimension at most $1$ are jointly ``determinull'' if an only if they are the same codimension $1$ subspace.

The proof of Theorem \ref{thm:manyfriends} proceeds by induction; the base case $n=3$ is Lemma \ref{lem:3friends}, and relies on some direct computations. For $n=2$ the statement is false!
The inductive step, remarkably, only uses the cofactor expansion formula of the permanent, and would have also gone through for the determinant --- it is the base case that makes all the difference! The inductive step relies on a simple graph-theoretic observation which is Lemma \ref{lem:nostar}. In the next section we show how Theorem \ref{thm:main} follows from \ref{thm:manyfriends}. We defer the proof of Theorem \ref{thm:manyfriends} to Section \ref{sec:structural}.

Understanding \frnd subspaces of larger codimension looks like a
very interesting problem, and seems relevant to extending the range of $k$ in our main theorem. An interesting special case of \frnd subspaces is when all the subspaces in the list are the same subspace $S$.

\begin{defi}
    The subspace $S \leq \F^n$ is said to be \emph{\vgood} if every $n\times n$ matrix $A$ whose columns are elements of $S$ has $\per A = 0$.
\end{defi}

This case plays an important role in the theoretical development leading to the proof of our main structural result. In Sections~\ref{sec:0fld} and~\ref{sec:poly},  we characterize permanullness in larger codimension. In Section \ref{sec:0fld}, we show that over fields of sufficiently high characteristic, the only \vgood subspaces are the trivial ones, i.e.~the ones having a common zero coordinates (Theorem \ref{thm:trivialpermanull}). In Section \ref{sec:poly} we give a full characterization of \vgood subspaces (for any characteristic of the ambient field) in terms of an auxiliary polynomial which we call the \emph{permanental polynomial} (Theorem \ref{thm:zeropoly}). This allows us to check if a given codimension-$k$ subspace is \vgood with $O_k(n^k)$ field operations, and a general subspace with $2^{O(n)}$ field operations.
\section{Proof of the Main Theorem}\label{sec:main}
In the current section we show how to use Theorem \ref{thm:manyfriends} to prove Theorem \ref{thm:main}.

\begin{proof}[Proof of Theorem \ref{thm:main}]
    Let $\SCAZ$ be the event that some column of $X$ is the zero vector $\vec{0}$. We will estimate the probability of $Z$ by showing that it almost coincides with the event $\SCAZ$ (up to probability $O(q^{-1.1n})$).
 
    Note that $\SCAZ \implies Z$. To get some sort of reverse implication, we will define a ``well-spreadedness'' event $\WS$ and show that:
    \begin{itemize}
        \item  $\WS$ occurs with probability $1 - o(q^{-n})$. 
        \item $\WS \wedge Z \implies \SCAZ$.
    \end{itemize}
 
    To deduce $\SCAZ$ in the second step, we will use the classification of codimension-$1$ \frnd lists, Theorem \ref{thm:manyfriends}. The event $\WS$ will help us meet the hypothesis of Theorem \ref{thm:manyfriends}. We now work towards defining $\WS$.
 
    Let $\vec{x}_1, \ldots, \vec{x}_n \in \F_q^k$ be the rows of $X$. We first observe that if $Z$ happens, any partition of the rows of $X$ into $k$ parts will  give us a construction of a \frnd list. Indeed, if $N_1, \ldots, N_k$ is a partition of $[n] = \{1, 2, \ldots, n\}$, then the linear spaces $S_1, \ldots, S_k \leq \F_q^k$ defined by
    	\[S_i = \spn\{ \vec{x}_j : j \in N_i \}\]
    form a \frnd list; this follows immediately from the hypothesis that any $k\times k$-submatrix of $X$ has vanishing permanent, by the facts that the permanent is a multilinear function of the rows. To apply Theorem \ref{thm:manyfriends} to this, we will need the subspaces $S_i$ to have dimension at least $k-1$. This motivates the following definition of the well-spreadedness event $\WS$: it is the event that there exists a partition $N_1, \ldots, N_k$ of $[n]$ such that for each $i$ we have $\dim(S_i) \geq k-1$. It is immediate from Theorem \ref{thm:manyfriends} that $\WS \wedge Z \implies \SCAZ$.

    We make the following claim, which we prove immediately after showing how to use it to complete current proof.

    \begin{claim}\label{claim:bigprb}
        $$\Pr[\WS] \geq 1 - O(q^{-1.1n}).$$
    \end{claim}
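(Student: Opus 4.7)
The plan is to apply Edmonds' matroid partition theorem to convert $\WS$ into a concrete combinatorial condition on the rows of $X$, and then bound the complement of that condition by a union bound over subspaces of $\F_q^k$.

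View the rows $\vec{x}_1, \ldots, \vec{x}_n$ as defining a matroid $M$ on $[n]$ with rank function $r(A) = \dim \spn\{\vec{x}_j : j \in A\}$, and let $M'$ denote its truncation to rank $k-1$. A group $N_i$ satisfies $\dim(S_i) \geq k-1$ iff $N_i$ is $M'$-spanning, so $\WS$ is exactly the event that $[n]$ can be partitioned into $k$ spanning sets of $M'$. By Edmonds' matroid partition theorem, such a partition exists iff $n - |A| \geq k(k-1 - r_{M'}(A))$ for every $A \subseteq [n]$. Replacing $A$ by its flat closure only enlarges $|A|$ while preserving the rank, so the condition reduces to the flats of $M'$ of rank at most $k-2$, which are exactly the sets $A_V := \{j \in [n] : \vec{x}_j \in V\}$ as $V$ ranges over subspaces of $\F_q^k$ of codimension $j \geq 2$ (with $r_{M'}(A_V) = \dim V$). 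Thus $\WS$ holds iff for every such $V$ of codimension $j$, at least $k(j-1)$ rows of $X$ lie outside $V$; equivalently, $\WS^c$ is contained in the union over $j \in \{2, \ldots, k\}$ and codim-$j$ subspaces $V$ of the event $E_V$ that fewer than $k(j-1)$ rows of $X$ lie outside $V$.

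For a fixed codim-$j$ subspace $V$, each row lies outside $V$ independently with probability $1 - q^{-j}$, so bounding the left tail of $\text{Binomial}(n, 1 - q^{-j})$ by the largest term times the number of terms yields
\[
\Pr[E_V] \;\leq\; k(j-1) \cdot n^{k(j-1)} \cdot q^{jk(j-1) - jn}.
\]
Combining this with the Gaussian binomial estimate $\binom{k}{j}_q = O(q^{j(k-j)})$ for the number of codim-$j$ subspaces gives a total contribution $O\bigl(k(j-1)\, n^{k(j-1)}\, q^{j^2(k-1) - jn}\bigr)$ from codimension $j$. Summing over $j \in \{2, \ldots, k\}$, the exponent $j^2(k-1) - jn$ is least negative at $j = 2$ (since $2j(k-1) < n$ throughout the range for $k \ll n$), so the $j = 2$ term dominates and the sum is $O(k \cdot n^k \cdot q^{4(k-1) - 2n})$. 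Under the hypothesis $k \leq 0.1\sqrt{n}$, both $\log_q n^k = k \log_q n$ and $4(k-1)$ are $o(n)$, so the bound is $q^{-(2-o(1))n}$, which is comfortably smaller than $q^{-1.1 n}$ for all large $n$ (and recovers the stronger $q^{-(2-o(1))n}$ rate mentioned in the Remark when $k = o(\sqrt{n})$). The main obstacle is the correct invocation of Edmonds' theorem and the reduction to flats; the binomial tail estimate and union bound afterwards are routine.
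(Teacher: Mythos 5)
Your proof is correct, but it takes a genuinely different route from the paper. The paper's argument is a one-pass greedy algorithm: rows are revealed sequentially and assigned to whichever part they can still help, a row is ``ineffective'' only if it falls into an intersection of low-dimensional spans (probability $\le q^{-2}$ conditioned on the past), and a union bound over the $\binom{n}{k(k-1)}$ possible sets of ineffective positions finishes the proof. You instead characterize $\WS$ exactly via Edmonds' base covering theorem for the rank-$(k-1)$ truncation, reduce the violating sets to flats, and union-bound over subspaces $V \le \F_q^k$ of codimension $j\ge 2$; your tail and counting estimates check out, and the exponent $j^2(k-1)-jn$ is indeed minimized at $j=2$ in the stated range. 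One small imprecision: $\WS$ coincides with ``$[n]$ partitions into $k$ spanning sets of $M'$'' only when the rows span at least $k-1$ dimensions; but if they do not, they all lie in some codimension-$2$ subspace $V$ and your event $E_V$ already occurs, so the containment $\WS^c \subseteq \bigcup_V E_V$ that you actually use is unaffected. The trade-off between the two approaches is real: the paper's proof is elementary and self-contained, whereas yours imports matroid union but pays only $q^{O(k^2)} \cdot n^{O(k)}$ in the union bound rather than $\binom{n}{k(k-1)}$, and therefore yields the sharper rate $q^{-(2-o(1))n}$ throughout $k \le 0.1\sqrt{n}$ --- strictly better than the paper's $O(q^{-1.1n})$ at the boundary, where the paper only claims the $(2-o(1))$ exponent for $k = o(\sqrt{n})$.
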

 
    Assuming the claim, we complete the proof of the theorem. By the claim, we have:
        $$\Pr[Z] \leq \Pr[\WS \wedge Z ] + \Pr[\neg \WS ] \leq \Pr[\SCAZ] + O(q^{-1.1n}).$$
    On the other hand, since $\SCAZ \implies Z$, we have:
        $$\Pr[\SCAZ] \leq \Pr[Z].$$
 
    Finally, since $\Pr[\SCAZ] = \frac{k}{q^n} + O(k^2 \cdot q^{-2n})$, we get the desired result.
\end{proof}

\begin{proof}[Proof of Claim \ref{claim:bigprb}]
    We give an algorithm for finding the partition $N_1, \ldots, N_k$ witnessing the well-spreadedness event $\WS$, and then show that when $k$ is small enough compared to $n$, the algorithm succeeds with very high probability.
 
    \begin{enumerate}
        \item Initialize $N_1 = N_2 = \cdots = N_k = \emptyset$.
        \item Initially all $i \in [k]$ are marked \textsc{active}.
        \item For each $i \in [k]$, let $S_i$ denote $\spn\{ \vec{x}_j : j \in N_i  \}$.
        \item For $\ell = 1, 2, \ldots, n$, do the following:
        \begin{itemize}
            \item Pick one \textsc{active} $i \in [k]$  for which $\vec{x}_\ell$ does not lie in $S_i$, and:
            \begin{itemize}
                \item Add $\ell$ into the set $N_i$.
                \item If $\dim(S_i) \geq k-1$, then mark $i$ \textsc{inactive}.
            \end{itemize}
            \item If no such \textsc{active} $i$ exists, mark $\ell$ as \textsc{ineffective}, and we add $\ell$ into one of the $N_i$.
        \end{itemize}
    \end{enumerate}

    We now prove that with very high probability by the end of the algorithm each $S_i$ has dimension $\geq k-1$. This will prove that with very high probability $\WS$ occurred.
 
    Observe that in any iteration of the $\ell$ loop, either $\sum_{i \in [k]} \dim(S_i)$ increases by $1$, or $\ell$ is \textsc{ineffective}. Thus at the end of the algorithm, if some $i \in [k]$ has $\dim(S_i) \leq k-2$ (so that $i$ is \textsc{active}), then at most $(k-1)^2+k-2 = k(k-1)-1$ of the $\ell$'s were ``effective'', and we must have strictly more than $n-k(k-1)$ \textsc{ineffective} $\ell$'s.

    Fix some $\ellstar \in [n]$. Suppose we are just about to start the $\ellstar$-th iteration of the loop $\ell \leftarrow \ellstar$. So far only $\vec{x}_1, \ldots, \vec{x}_{\ellstar -1}$ have been revealed. What would it take for $\ellstar$ to be marked \textsc{ineffective}?

    This will happen only if $\vec{x}_\ellstar$ lies inside $S_i$ for every \textsc{active} $i$, namely if 
        $$\vec{x}_{\ellstar} \in \bigcap_{i\mbox{\,\scriptsize \textsc{active}}} S_i.$$
 
    Observe that for any \textsc{active} $i$, we have $\dim(S_i) \leq k-2$. Thus, if there is at least one \textsc{active} $i$, the probability that $\ellstar$ is marked \textsc{ineffective} is at most:
        \[\frac{q^{k-2}}{q^k} = q^{-2}.\]
    If there are no \textsc{active} $i$, then we already have $\dim(S_i) \geq k-1$ for all $i$.
 
    Suppose we have $L>n-k(k-1)$ \textsc{ineffective} $\ell$'s. Using the union bound, the probability that some $i$ remains \textsc{active} at the end of the algorithm is at most
    \begin{align*}
        \binom{n}{L} \cdot \left(q^{-2} \right)^{L} &= \binom{n}{n-L} \cdot q^{-2L}\\
        &\leq \binom{n}{n-L} \cdot q^{-2(n-k(k-1))}\\
        &\leq \binom{n}{k(k-1)}q^{2k(k-1)}q^{-2n}\\
        &\leq  \binom{n}{k^2}q^{2k^2}q^{-2n}
    \end{align*}
    Thus for $k = 0.1 \sqrt{n}$, the probability that some $i$ remains \textsc{active} is at most $O(q^{-1.1n})$.
    
    If none of the $i$ is \textsc{active} at the end of the algorithm, then the partition $N_1, \ldots, N_k$ produced by the algorithm is a witness for the event $\WS$. Thus 
    $$\Pr[\WS] \geq 1 - \Pr[\mbox{Some $i$ remains \textsc{active}} ] \geq 1 - O(q^{-1.1n}).$$
\end{proof}

\section{Structural Results}\label{sec:structural}
In this section we prove the structural results we have relied on in the proof of Theorem \ref{thm:main}. Our main structural result is 

\manyfriends*

Note that the ``if'' direction of the theorem is obvious, and below we shall prove the ``only if'' direction. The proof proceeds by induction; the base case $n=3$ is Lemma \ref{lem:3friends}. The inductive step relies on a simple graph-theoretic observation which is Lemma \ref{lem:nostar}. In the remainder of this section we prove the necessary lemmas and show how Theorem \ref{thm:manyfriends} follows from them.

\subsection{Codimension-1 \vgood subspaces}

In this subsection, we focus on the case where all the subspaces $S_i$ are the same, namely subspaces $S \subseteq \F^n$ which are {\em \vgood}.

Clearly $S = \F^n$ cannot be \vgood (e.g., consider the identity matrix), which leads naturally to the question of whether there are any \vgood codimension-$1$ spaces. Since a matrix with a $\vec{0}$-row has a vanishing permanent, we see that any subspace of the form $\{\vec{e}_i\}^\perp$ is \vgood. We now show that these are the only codimension-$1$ \vgood spaces.

\begin{restatable}[Classification of codimension-$1$ \vgood subspaces]{thm}{vgoodone}
\label{thm:c1vgood}
    Let $S\leq \F^n$ be a codimension-$1$ subspace. Then $S$ is \vgood if and only if $S = \{\vec{e}_i\}^\perp$ for some $1 \leq i \leq n$.
\end{restatable}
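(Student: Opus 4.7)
The plan is to prove the non-trivial ``only if'' direction by contrapositive: if $S$ is a codimension-$1$ subspace not of the form $\{\vec{e}_i\}^\perp$, I will exhibit an explicit $n \times n$ matrix with all columns in $S$ and with nonvanishing permanent, showing that $S$ fails to be \vgood. (The ``if'' direction is immediate, since any matrix with every column in $\{\vec{e}_i\}^\perp$ has a zero $i$-th row.)

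Write $S = \{\vec{x} \in \F_q^n : \iprod{\vec{a}}{\vec{x}} = 0\}$ for some nonzero defining vector $\vec{a}$. The hypothesis that $S$ is not of the form $\{\vec{e}_i\}^\perp$ for any $i$ translates to $\vec{a}$ having at least two nonzero coordinates; after relabeling I may assume $a_1, a_2 \neq 0$. I then split the remaining indices into $J = \{j \geq 3 : a_j \neq 0\}$ and $K = \{j \geq 3 : a_j = 0\}$, and construct the matrix $A$ column-by-column: for $j \in K$ I take column $j$ to be $\vec{e}_j$ (which lies in $S$ since $a_j = 0$); for $j \in J$ I take column $j$ to be $a_j \vec{e}_1 - a_1 \vec{e}_j$ (directly verified to lie in $S$); and for $j \in \{1,2\}$ I take column $j$ to be $(a_2, -a_1, 0, \ldots, 0)$ (also in $S$).

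Evaluating $\per A$ would then be a routine iterated Laplace expansion. Every $K$-indexed column has a single nonzero entry, a $1$ in row $j$, so it pins the corresponding permutation value and contributes a factor of $1$. Once those rows and columns are stripped away, every remaining row indexed by $j \in J$ has its unique nonzero entry, $-a_1$, in column $j$, contributing $(-a_1)^{|J|}$ overall. What survives is the top-left $2 \times 2$ block $\begin{pmatrix} a_2 & a_2 \\ -a_1 & -a_1 \end{pmatrix}$, whose permanent is $-2 a_1 a_2$. Multiplying the accumulated factors yields $\per A = 2(-a_1)^{|J|+1} a_2$; the edge cases $J = \emptyset$ and $K = \emptyset$ are handled automatically.

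The main point to watch for is the factor of $2$: it is nonzero precisely when the characteristic is odd. This is exactly where the permanent-versus-determinant distinction enters --- the analogous construction would cancel via signs for the determinant --- and it matches the paper's standing assumption of odd characteristic. With $a_1, a_2 \neq 0$, this gives $\per A \neq 0$, contradicting the assumption that $S$ is \vgood and completing the proof.
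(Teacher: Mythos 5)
Your proof is correct, and it takes a genuinely different route from the paper's. The paper proves the ``only if'' direction by induction on $n$, splitting into cases according to whether the defining vector $\vec{v}$ has a zero coordinate, lies in $\spn\{\tr{(1,\ldots,1)}\}$, or has two distinct nonzero coordinates, and in the last two cases exhibiting ad hoc witness matrices (a bidiagonal matrix, and a padded $3\times 3$ block) before invoking the inductive hypothesis. You instead give a single explicit witness matrix in one shot: with $S=\{\vec{a}\}^\perp$ and $a_1,a_2\neq 0$, your columns $\vec{e}_j$ for $j\in K$, $a_j\vec{e}_1-a_1\vec{e}_j$ for $j\in J$, and the repeated column $\tr{(a_2,-a_1,0,\ldots,0)}$ all visibly lie in $S$, and the permanent computation is sound: each $K$-column has a single nonzero entry pinning $\sigma(j)=j$, after which each row $j\in J$ has its unique surviving nonzero entry $-a_1$ on the diagonal, leaving the $2\times 2$ block with permanent $-2a_1a_2$, so $\per A = (-a_1)^{\crd{J}}\cdot(-2a_1a_2)\neq 0$ in odd characteristic. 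Both arguments ultimately rest on the same $2\times 2$ kernel $-2a_1a_2$, but yours avoids the induction and case analysis entirely, which makes it shorter and more self-contained; what the paper's inductive formulation buys is that it rehearses the padding-and-cofactor-expansion technique reused in the harder induction for Theorem \ref{thm:manyfriends}. One small presentational point: state explicitly that repeating a column is harmless for the permanent (it is), and note that the degenerate case $n=2$ (where $J=K=\emptyset$) is exactly the bare $2\times 2$ block, as you indicate.
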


\begin{proof}
We prove the ``only if'' direction by induction on $n$. To avoid confusion, we use superscripts to denote the dimension of a vector, e.g., $\vec{e}_1^2 = \tr{(1, 0)}$ and $\vec{e}_1^3 = \tr{(1, 0, 0)}$.

The claim is trivial for $n=1$. For $n=2$, we show that if $\{\vec{v}\}^\perp$ is \vgood, then $\vec{v}$ is a scalar multiple of $\vec{e}_1^2$ or $\vec{e}_2^2$. Indeed, the only nonzero vectors that are not a scalar multiple of either $\vec{e}_1^2$ or $\vec{e}_2^2$ are of the form $\tr{(\lambda, \mu)}$ with $\lambda, \mu\neq 0$. Note that
\begin{align*}
\per\begin{mx}
    \mu & \mu\\
    -\lambda & -\lambda
\end{mx} = -2\lambda\mu\neq 0
\end{align*}
since $\lambda, \mu\neq 0$ and the characteristic $p \neq 2$. Since this matrix has columns in $\{\vec{v}\}^\perp$, we conclude that if $S = \{\vec{v}\}^\perp$ is codimension-$1$ \vgood space, then $\vec{v} = \lambda\vec{e}^2_i$.

Suppose now that $n\geq 2$ and we have proved the claim for all values $1, 2, \ldots, n$; we shall prove it for $n+1$. Let $\vec{v} = \tr{(v_1, \ldots, v_{n+1})}$ be some nonzero vector in $\F^{n+1}$ not of the form $\lambda\vec{e}^{n+1}_i$, we prove that $\{\vec{v}\}^\perp$ is not \vgood by finding a matrix with columns in $\{\vec{v}\}^\perp$ and non-zero permanent. 

\begin{description}
    \item[Case 1] $\vec{v}$ has at least one zero coordinate. Assume without loss of generality $v_{n+1} = 0$ and denote $\widehat{\vec{v}} = \tr{(v_1, \ldots, v_n)}$. Note that $\widehat{\vec{v}}$ is nonzero and not of the form $\lambda\vec{e}^n_i$. By the induction hypothesis, there exist $\widehat{\vec{r}}_1, \ldots, \widehat{\vec{r}}_n \in \{\widehat{\vec{v}}\}^\perp$ such that
    $$\per\begin{mx}
    \widehat{\vec{r}}_1&
    \cdots&
    \widehat{\vec{r}}_n
    \end{mx}\neq 0.$$
    Let $\vec{r}_i = \tr{(\widehat{\vec{r}}_i, 0)}$. Then $\vec{e}^{n+1}_{n+1}, \vec{r}_1, \vec{r}_2, \ldots, \vec{r}_n \in \{\vec{v}\}^\perp$ and
    $$
    \per\begin{mx}
        \vec{r}_1&
        \cdots&
        \vec{r}_n&
        \vec{e}_{n+1}
    \end{mx} = \per\begin{mx}
    \widehat{\vec{r}}_1&
    \cdots&
    \widehat{\vec{r}}_n
    \end{mx}\neq 0.
    $$
    This proves that $\{\vec{v}\}^\perp$ is not \vgood.

    \item[Case 2] $\vec{v} \in \spn\{\tr{(1, 1, \ldots, 1)}\}$. We prove directly that
    \begin{align*}
    &&A_n = 
    \begin{mx}
        1 & 0 & 0 & \cdots & 0 & 0 \\
        -1 & 1 & 0 & \cdots & 0 & 0 \\
        0 & -1 & 1 & \cdots & 0 & 0 \\
        \vdots & \vdots & \vdots & \ddots & \vdots &\vdots \\
        0 & 0 & 0 & \cdots & 1 & 1 \\
        0 & 0 & 0 & \cdots & -1 & -1
    \end{mx}
    \end{align*}
    (with columns in $\{\vec{v}\}^\perp$) has a nonvanishing permanent. The proof proceeds by induction:
    $$\per\begin{mx}
    1 & 1 \\
    -1 & -1
    \end{mx} = -2\neq 0$$
    and for $n\geq 3$ we simply expand $A_n$ by the first row to obtain $\per A_n = \per A_{n-1}$.

    \item[Case 3] None of the coordinates of $\vec{v}$ is $0$ and $\vec{v}\notin\spn\{\tr{(1, 1, \ldots, 1)}\}$. In this case, there are two coordinates $(i, j)$ such that $(v_i, v_j) = (\lambda, \mu)$ for some distinct nonzero scalars $\lambda\neq\mu$, and we assume without loss of generality that $(v_n, v_{n+1}) = (\lambda, \mu)$. 
    \begin{description}
        \item[Case 3a $n+1=3$] In this case $\vec{v} = \tr{(\alpha, \lambda, \mu)}$ for some nonzero scalar $\alpha$.
        \begin{align*}
        &\per\begin{mx}
            \lambda & 0 & 0 \\
            -\alpha & \mu & \mu \\
            0 & -\lambda & -\lambda
        \end{mx} = -2\mu\lambda^2\neq 0
        \end{align*}
        so that $\{\vec{v}\}^\perp$ is not \vgood.

        \item[Case 3b $n+1\geq 4$] In this case we denote $\widehat{\vec{v}} = \tr{(v_1, v_2, \ldots, v_{n-1})}$ (dropping the last two coordinates of $\vec{v}$) and note that it is a nonzero vector not of the form $\lambda\vec{e}^{n-1}_i$ (because it has at least two nonzero coordinates). By the induction hypothesis there exist $\widehat{\vec{r}}_1, \ldots, \widehat{\vec{r}}_{n-1} \in \{\widehat{\vec{v}}\}^\perp$ such that
        $$\per\begin{mx}
            \widehat{\vec{r}}_1&
            \widehat{\vec{r}}_2&
            \cdots &
            \widehat{\vec{r}}_{n-1}
        \end{mx}\neq 0.$$
        Let $\vec{r}_i = \tr{(\widehat{\vec{r}}_i, 0, 0)}$ and $\vec{s} = \tr{(0, 0, \ldots, 0, \lambda, -\mu)}$. Then $\vec{s}, \vec{r}_1, \ldots, \vec{r}_{n-1} \in (\spn\{\vec{v}\})^{\perp}$ and
        \begin{align*}
            \per&\begin{mx}
                \vec{r}_1&
                \vec{r}_2&
                \cdots&
                \vec{r}_{n-1}&
                \vec{s}&
                \vec{s}
            \end{mx} 
        =\\ 
        &-2\mu\lambda\per\begin{mx}
            \widehat{\vec{r}}_1&
            \widehat{\vec{r}}_2&
            \cdots &
            \widehat{\vec{r}}_{n-1}
        \end{mx}\neq 0.
        \end{align*}
    \end{description}
\end{description}
\end{proof}

\subsection{The base case of the induction}
In this subsection we prove Lemma \ref{lem:3friends}, the base case ($n=3$) of the classification of \frnd subspaces. We proceed via a sequence of lemmas.

\begin{lem}[Classification of $n\times 2$ matrices]\label{lem:i2}
    Let $n\geq 2$. An $n\times 2$ matrix $A \in \F^{n\times 2}$ has non-full permanental rank $\prk{A}<2$ if and only if exactly one of these three possibilities holds:
    \begin{enumerate}[label=(\roman*)]
        \item\label{lemi2:c1} the matrix has a zero column;
        \item\label{lemi2:c2} there is a single nonzero row, and that row is a scalar multiple of $(1, x)$ (for some nonzero $x \in \F$);
        \item\label{lemi2:c3} there are exactly two nonzero rows: one a scalar multiple of $(1, x)$ and the other a scalar multiple of $(1, -x)$ (for some nonzero $x \in \F$).
    \end{enumerate}
\end{lem}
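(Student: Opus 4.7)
\textbf{Proof plan for Lemma \ref{lem:i2}.}

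The plan is to reformulate the hypothesis in pairwise form and then run a clean case analysis on the rows. Writing the rows of $A$ as $r_i = (a_i, b_i)$, the condition $\prk{A} < 2$ is equivalent to the assertion that $a_i b_j + a_j b_i = 0$ for every $i \neq j$. The ``if'' direction will be a routine check for each of (i), (ii), (iii): case (i) gives every $2 \times 2$ submatrix a zero column; case (ii) gives every $2 \times 2$ submatrix a zero row (only one row of $A$ is nonzero); and in case (iii) any $2 \times 2$ submatrix either avoids one of the two nonzero rows (so has a zero row), or consists precisely of those two rows, in which case the permanent expands to $\lambda(-\mu x) + \mu(\lambda x) = 0$. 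Mutual exclusivity will be immediate by counting nonzero rows and noting that in (ii) and (iii) both columns must be nonzero.

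For the ``only if'' direction, I will assume $\prk{A} < 2$ and that case (i) fails (so neither column is zero), and deduce (ii) or (iii). The first step is to produce a ``pivot'' row $r_{i_0}$ with both entries nonzero. Pick $i$ with $a_i \neq 0$ and $j$ with $b_j \neq 0$; if $i = j$ set $i_0 = i$; otherwise the pairwise relation $a_i b_j + a_j b_i = 0$ forces $a_j b_i = -a_i b_j \neq 0$, hence $b_i \neq 0$ and again $i_0 = i$ works. Write $r_{i_0} = a_{i_0}(1, x)$ with $x = b_{i_0}/a_{i_0} \neq 0$.

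Next I will show that every other nonzero row is a nonzero scalar multiple of $(1, -x)$. For any other nonzero row $r_k = (a_k, b_k)$, the pairwise relation with $r_{i_0}$ gives $b_k = -x a_k$ (after dividing by $a_{i_0}$); if $a_k = 0$ this forces $b_k = 0$, contradicting nonzeroness of $r_k$. So $r_k = a_k(1, -x)$ with $a_k \neq 0$. Finally I will rule out two such companion rows simultaneously: if both $r_k = a_k(1, -x)$ and $r_{k'} = a_{k'}(1, -x)$ occur with $k \neq k'$, then their pairwise permanent is $-2\, a_k a_{k'} x$, which is nonzero since $a_k, a_{k'}, x$ are nonzero and the characteristic is odd. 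Thus there are either zero companion rows (placing $A$ in case (ii)) or exactly one companion row (placing $A$ in case (iii)).

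The main (small) obstacle is this very last step, which is exactly where the odd-characteristic assumption is used: without it, arbitrarily many parallel rows $a_k(1, -x)$ would remain mutually compatible and the classification would collapse -- consistent with the paper's recurring theme that the permanent coincides with the determinant in characteristic $2$, and that the separation between the two first manifests in simple $2 \times 2$ identities like $-2\lambda\mu \neq 0$.
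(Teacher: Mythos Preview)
Your proof is correct and follows essentially the same idea as the paper: both reduce to the pairwise constraint $a_i b_j + a_j b_i = 0$ and analyze which row types can coexist, with the odd-characteristic assumption entering via the same $-2\lambda\mu x \neq 0$ computation. The organization differs slightly --- the paper enumerates projective row types $(0,0),(0,1),(1,x)$ and checks compatibility of each pair, whereas you first locate a single pivot row with both entries nonzero and then show every other nonzero row must be its ``reflection'' $(1,-x)$, with at most one such --- but the underlying content is the same, and your version is arguably a touch cleaner in that it avoids separately treating the $(1,0)$ and $(0,1)$ row types.
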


\begin{proof}
Let $A$ be an $n\times 2$ matrix with $\prk{A} < 2$. Each row of $A$ is a scalar multiple of one of the $\crd{\F}+2$ vectors
\begin{align*}
    &(0, 0) &&(0, 1) &&(1, x)
\end{align*}
where $x \in \F$ is an arbitrary field element. Which of these vectors may appear together as rows in $A$? First, consider vectors of the form $(1, x)$ with $x \neq 0$. Since
$$\per\begin{mx}1 & x \\ 1 & y\end{mx} = x+y = 0 \iff y = -x$$
we see that $(1, x)$ and $(1, -x)$ may appear in $A$ together or individually, each appearing at most once. Since
$$\per\begin{mx}1 & x \\ 0 & 1\end{mx} = 1$$
no vector of the form $(1, x)$ may appear with $(1, 0)$. 
\end{proof}

Next we need some basic facts about codimension-$1$ spaces in $\F_q^3$. We shall consider a function $\Per_{3\times 2}: \F_q^{3\times 2}\to \F_q^3$
\begin{align*}
    \begin{mx}
        a_{11} & a_{12} \\
        a_{21} & a_{22} \\
        a_{31} & a_{32}
    \end{mx} \mapsto \begin{mx}
        \per \begin{mx}
            a_{21} & a_{22}\\
            a_{31} & a_{32}
        \end{mx} &
        \per \begin{mx}
            a_{11} & a_{12} \\
            a_{31} & a_{32}
        \end{mx} &
        \per \begin{mx}
            a_{11} & a_{12}\\
            a_{21} & a_{22}
        \end{mx}
    \end{mx}^T
\end{align*}
(Note that the order of the subpermanents corresponds to expansion by the first column of a $3\times 3$ permanent.) Given some $U, V \leq \F_q^3$ subspaces of dimension $\geq 2$, we are interested in the dimension of
$$P(U, V) \eqdef \spn\stt{\Per_{3\times 2}\begin{mx}\vec{u}& \vec{v}\end{mx}}{\vec{u} \in U, \vec{v} \in V}.$$
Lemma \ref{lem:i2} implies that $\dim P(U, V) > 0$.

\begin{lem}\label{lem:1Per}
    Let $U, V \leq \F_q^3$ be subspaces, each of dimension $\geq 2$. Then $\dim P(U, V) = 1$ if and only if $U = V = \{\vec{e}_i\}^\perp$ for some $1 \leq i \leq 3$.
\end{lem}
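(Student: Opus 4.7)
The plan is to establish the ``only if'' direction, since the ``if'' direction reduces to a direct check: for $U = V = \{\vec{e}_i\}^\perp$ one sees immediately that every $\Per_{3\times 2}(\vec{u}, \vec{v})$ lies in $\spn\{\vec{e}_i\}$, with a nonzero value obtained by taking $\vec{u}, \vec{v}$ to be the two standard basis vectors in $\{\vec{e}_i\}^\perp$.

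For the ``only if'' direction, the core computation is to analyze the linear map $L_{\vec{u}}: \vec{v} \mapsto \Per_{3\times 2}(\vec{u}, \vec{v})$ for fixed $\vec{u}$. Writing $\vec{u} = \tr{(a, b, c)}$, this map is represented by the symmetric zero-diagonal matrix
\[
M_{\vec{u}} = \begin{mx} 0 & c & b \\ c & 0 & a \\ b & a & 0 \end{mx},
\]
and a short expansion gives $\det M_{\vec{u}} = 2abc$. Since the characteristic is odd, $L_{\vec{u}}$ is invertible exactly when every coordinate of $\vec{u}$ is nonzero. The same conclusion applies if we fix $\vec{v}$ instead, since $\Per_{3\times 2}(\vec{u}, \vec{v}) = \Per_{3\times 2}(\vec{v}, \vec{u})$ (the $2\times 2$ permanent is symmetric in its columns).

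Assuming $\dim P(U, V) = 1$, I would next argue that no $\vec{u} \in U$ can have all nonzero coordinates: otherwise $L_{\vec{u}}(V) \subseteq P(U, V)$ would have dimension $\dim V \geq 2$, contradicting the hypothesis. Hence every vector in $U$ lies in some coordinate hyperplane $H_i = \{\vec{e}_i\}^\perp$, and likewise for $V$. A brief counting argument then pins $U$ inside a single $H_i$: if $U \not\subseteq H_i$ for any $i$, each $U \cap H_i$ is a proper (hence $1$-dimensional) subspace of $U$, so $\bigcup_i (U \cap H_i)$ has at most $3q - 2$ elements, while $\crd{U} \geq q^2$; since $q \geq 3$ (odd characteristic), $q^2 > 3q - 2$, a contradiction. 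Dimension forces $U = H_i$ for some $i$, and symmetrically $V = H_j$ for some $j$.

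The last step is to rule out $i \neq j$, which I would do by a finite direct check: taking, say, $U = H_1$ and $V = H_2$ (other cases reduce by permuting coordinates), evaluating $\Per_{3\times 2}$ on the three pairs $(\vec{e}_2, \vec{e}_1)$, $(\vec{e}_2, \vec{e}_3)$, $(\vec{e}_3, \vec{e}_1)$ yields $\vec{e}_3, \vec{e}_1, \vec{e}_2$ respectively, so $P(U, V) = \F_q^3$, contradicting $\dim P(U, V) = 1$. Hence $i = j$, giving $U = V = \{\vec{e}_i\}^\perp$. The main obstacle I anticipate is identifying the right invariant; once the identity $\det M_{\vec{u}} = 2abc$ is recognized as the source of the odd-characteristic dichotomy, the remainder is elementary subspace bookkeeping plus a single symbolic verification.
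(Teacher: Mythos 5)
Your proof is correct, and it takes a genuinely different route from the paper's. The paper first rules out $\dim U = 3$ by a direct computation with $\vec{e}_3$, then proves a claim that if some $\vec{e}_i$ lies in $U$ or $V$ then both spaces equal a common $\{\vec{e}_j\}^\perp$, and finally shows that some $\vec{e}_i$ must indeed lie in $U$ or $V$ by using bilinearity to produce a pair $\vec{u}, \vec{v}$ with $\Per_{3\times 2}\begin{pmatrix}\vec{u} & \vec{v}\end{pmatrix} = \vec{0}$, invoking the classification of $n\times 2$ matrices of non-full permanental rank (Lemma \ref{lem:i2}), and deriving a contradiction from two further evaluations. Your argument instead isolates the invariant $\det M_{\vec{u}} = 2abc$ for the linear map $\vec{v}\mapsto \Per_{3\times 2}\begin{pmatrix}\vec{u} & \vec{v}\end{pmatrix}$: in odd characteristic this map is invertible precisely when $\vec{u}$ has no zero coordinate, so $\dim P(U,V)=1$ forces every vector of $U$ (and, by the symmetry of $\Per_{3\times 2}$ in its two columns, of $V$) into a coordinate hyperplane; the standard fact that a $2$-dimensional space over $\F_q$ with $q\geq 3$ is not the union of three proper subspaces then pins $U$ and $V$ to single coordinate hyperplanes, and a three-line evaluation excludes $U=\{\vec{e}_i\}^\perp$, $V=\{\vec{e}_j\}^\perp$ with $i\neq j$ (the reduction to one such case by permuting coordinates is legitimate, since $\Per_{3\times 2}$ is equivariant under simultaneous row permutations). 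Your route is shorter, bypasses Lemma \ref{lem:i2} entirely, and makes the role of odd characteristic transparent through the single identity $\det M_{\vec{u}} = 2abc$; the paper's route is more computational but reuses machinery it develops for other purposes. One cosmetic point: your covering count ($3q-2$ versus $q^2$) presumes $\dim U = 2$, so you should note explicitly that $\dim U = 3$ is already excluded at that stage because $\F_q^3$ contains $(1,1,1)$, which has no zero coordinate.
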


\begin{proof}
    The ``if'' direction is immediate: if $U = V = \{\vec{e}_i\}^\perp$, then it is easy to see that $P(U, V) = \spn\{\vec{e}_i\}$. For the ``only if'' direction, we assume that $\dim P(U, V) = 1$ and start by noting that it is not possible for either $U$ or $V$ to be full dimensional. Indeed, assume without loss of generality $V = \F_q^3$. Choosing any two linearly independent vectors $\vec{a}, \vec{b} \in U$ we know that $\begin{mx} \vec{a} & \vec{b}\end{mx}$ has rank $2$, so there exists some coordinate $1 \leq i\leq 3$ such that the vectors $\widehat{\vec{a}}, \widehat{\vec{b}} \in \F_q^2$ obtained from $\vec{a}, \vec{b}$ by deleting the $i$-th coordinate are linearly independent. For the purpose of illustration, say that $i=3$. Then,
    \begin{align*}
        &\Per_{3\times 2}\begin{mx}\vec{a}&\vec{e}_3\end{mx} = \tr{\begin{mx}a_2 & a_1 & 0\end{mx}}
        &&\Per_{3\times 2}\begin{mx}\vec{b}&\vec{e}_3\end{mx} = \tr{\begin{mx}b_2 & b_1& 0 \end{mx}}
    \end{align*}
    so that $\dim P(U, V) \geq 2$.
    
    Therefore, we may assume that $\dim U = \dim V = 2$. Since $\dim P(U, V) = 1$, we have $P(U, V) = \spn\{\vec{p}\}$ for some nonzero vector $\vec{p} = \tr{(p_1, p_2, p_3)} \in \F_q^3$.

    \begin{claim}
    	If $\vec{e}_i$ is an element of $U$ or $V$ (for some $1 \leq i \leq 3$) then $U = V = \{\vec{e}_j\}^\perp$ for some $1 \leq j \leq 3$.
    \end{claim}

    \begin{proof}[Proof of claim]
    	Without loss of generality, suppose $\vec{e}_i \in U$, and for concreteness suppose $\vec{e}_3\in U$ (the other cases are analogous).

        For every $\vec{v} = \tr{(v_1, v_2, v_3)} \in V$ there exists some scalar $\lambda \in \F$ such that we have
        	\[\Per_{3\times 2}\begin{mx}\vec{v} & \vec{e}_3\end{mx} = \tr{\begin{mx}v_2 & v_1 & 0\end{mx}} = \lambda\vec{p}.\]
        In particular, $p_3 = 0$, and $V = \spn\left\{\vec{e}_3, \begin{mx}p_2 & p_1 & 0\end{mx}\right\}$. But then $\vec{e}_3 \in V$ and the same reasoning shows that $U = \spn\left\{\vec{e}_3, \begin{mx}p_2 & p_1 & 0\end{mx}\right\} = V$. Consider now
        \begin{align*}
            \Per_{3\times 2}\begin{mx}p_2 & p_2 \\ p_1 & p_1\\ 0 & 0\end{mx} = \tr{\begin{mx}0 & 0 & 2p_1p_2\end{mx}} \in \spn\{\vec{p}\} \iff
             p_1p_2 = 0.
        \end{align*}
        If $p_1 = 0$ we have $U = V = \{\vec{e}_2\}^\perp$, whereas if $p_2 = 0$ we have $U = V = \{\vec{e}_1\}^\perp$.
    \end{proof}

    Therefore, it remains to show that $\vec{e}_i$ is an element of $U$ or $V$ (for some $1 \leq i \leq 3$). Suppose towards contradiction that for any $1 \leq i \leq 3$, $\vec{e}_i \notin U, V$. We start by finding nonzero vectors $\vec{u} \in U$ and $\vec{v} \in V$ such that $\Per_{3\times 2}\begin{mx}\vec{u}& \vec{v}\end{mx} = 0$.

    To find such vectors, fix $\vec{u}, \vec{u}'$ a basis for $U$ and some nonzero $V$-vector $\vec{0}\neq \vec{v}\in V$. The map $H: \F_q^3\times \F_q^3 \to \F_q^3$ taking a pair of vectors to their $\Per_{3\times 2}$ function $(\vec{u}, \vec{v}) \mapsto \Per_{3\times 2}\begin{mx}\vec{u} &\vec{v}\end{mx}$ is bilinear. We have
        \begin{align*}
            &H(\vec{u}, \vec{v}) = \lambda_{uv}\vec{p}
            &&H(\vec{u}', \vec{v}) = \lambda_{u'v}\vec{p}.
        \end{align*}
    For some scalars $\lambda_{uv}, \lambda_{u'v}$. If one of $\lambda_{uv}, \lambda_{u'v}$ is zero, we are done. Otherwise, bilinearity gives
    	\[H(\lambda_{uv}^{-1}\lambda_{u'v}\vec{u}-\vec{u}', \vec{v}) = \vec{0},\]
    and we know that $\lambda_{uv}^{-1}\lambda_{u'v}\vec{u}-\vec{u}' \neq \vec{0}$ since $\vec{u}, \vec{u}'$ are linearly independent. We conclude that it is possible to find some nonzero vectors $\vec{u} \in U$ and $\vec{v} \in V$ such that $\Per_{3\times 2}\begin{mx}\vec{u}& \vec{v}\end{mx} = 0$.
        
    Next, we apply the conclusion of Lemma \ref{lem:i2} to the matrix $\begin{mx}\vec{u} & \vec{v}\end{mx}$. Since $\vec{e}_i \notin U, V$ for any $1 \leq i \leq 3$, we must have 
    \[\begin{mx}\vec{u} & \vec{v}\end{mx} = \begin{mx}0 & 0 \\ a & ax \\ b & -bx\end{mx}\]
    for some nonzero scalars $a, b, x \neq 0$, up to a permutation of the rows.

    In particular, up to some permutation of the coordinates, we may choose $\vec{u} = \tr{(0, 1, c)} \in U$ and $\vec{v} = \tr{(0, 1, -c)} \in V$ for some $c \neq 0$. Since $U$ is a $2$-dimensional space and $\vec{e}_i \notin U$, we know that there exists a vector of the form $\tr{(1, 0, d)} \in U$ (with $d \neq 0$). Similarly, there exists a vector of the form $\tr{(1, 0, e)} \in V$ (with $e \neq 0$). We now observe that 
        \begin{align*}
            &\Per_{3\times 2}\begin{mx}0 & 1 \\ 1 & 0\\ a & c\end{mx} = \tr{\begin{mx}c & a & 1\end{mx}}
            &&\Per_{3\times 2}\begin{mx}1 & 0\\ 0 & 1\\ b & -a \end{mx} = \tr{\begin{mx}b & -a & 1\end{mx}}
        \end{align*}
    are both elements of $\spn\{\vec{p}\}$, and so are nonzero scalar multiples of each other. Now,
    	\[\tr{\begin{mx}c & a & 1\end{mx}} = \lambda\tr{\begin{mx}b & -a & 1\end{mx}}\]
    if and only if $\lambda = 1$, $a = 0$, and $b=c$. In particular, $U\ni \tr{\begin{mx}0 & 1 & a\end{mx}} = \vec{e}_2$, contradicting the assumption that $\vec{e}_i \notin U, V$ for any $1 \leq i \leq 3$.
\end{proof}

The proof of the base case of Theorem \ref{thm:manyfriends} follows easily from Lemma \ref{lem:1Per}.

\begin{lem}\label{lem:3friends}
    Let $S_1, S_2, S_3 \leq \F_q^3$ be three subspaces, each of dimension at least $2$. Then $S_1, S_2, S_3$ is a \frnd list if and only if  $S_1 = S_2 = S_3 = \{\vec{e}_i\}^{\perp}$ for some $1 \leq i \leq 3$.
\end{lem}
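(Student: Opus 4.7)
The plan is to reduce the three-subspace \frnd condition to the two-subspace structure already classified by Lemma \ref{lem:1Per}. The ``if'' direction is immediate: if $S_1 = S_2 = S_3 = \{\vec{e}_i\}^{\perp}$ then every $3\times 3$ matrix with columns in these subspaces has zero $i$-th row, and hence vanishing permanent. The rest of the argument is devoted to the ``only if'' direction.

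The key reduction is the cofactor expansion of the permanent along the first column,
$$\per\begin{mx}\vec{s}_1 & \vec{s}_2 & \vec{s}_3\end{mx} = \iprod{\vec{s}_1}{\Per_{3\times 2}\begin{mx}\vec{s}_2 & \vec{s}_3\end{mx}},$$
which rephrases the \frnd condition on $S_1,S_2,S_3$ as the single inclusion $P(S_2, S_3) \subseteq S_1^{\perp}$. From $\dim S_1 \geq 2$ we get $\dim S_1^{\perp} \leq 1$, while $\dim S_2, \dim S_3 \geq 2$ together with Lemma \ref{lem:i2} (as already invoked just before Lemma \ref{lem:1Per}) force $\dim P(S_2, S_3) \geq 1$. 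These two bounds pinch the dimensions to $\dim P(S_2, S_3) = 1$ and $\dim S_1 = 2$.

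At this point Lemma \ref{lem:1Per} applies directly to the pair $(S_2, S_3)$ and yields $S_2 = S_3 = \{\vec{e}_i\}^{\perp}$ for some $1\leq i\leq 3$, while the ``if'' calculation in that lemma shows $P(S_2, S_3) = \spn\{\vec{e}_i\}$. The containment $P(S_2, S_3) \subseteq S_1^{\perp}$ then forces $\vec{e}_i \in S_1^{\perp}$, so $S_1 \subseteq \{\vec{e}_i\}^{\perp}$; the dimension count $\dim S_1 = 2 = \dim\{\vec{e}_i\}^{\perp}$ promotes this inclusion to equality $S_1 = \{\vec{e}_i\}^{\perp}$. There is no real obstacle here once the cofactor expansion is written down: the heavy lifting was done in Lemma \ref{lem:1Per}, and the only thing one must verify is that $P(S_2, S_3)$ is genuinely nonzero under the dimension hypothesis, which is precisely the easy direction already covered by Lemma \ref{lem:i2}.
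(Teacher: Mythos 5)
Your proof is correct and uses the same core mechanism as the paper: cofactor expansion along the first column to rewrite the \frnd condition as $S_1 \perp P(S_2,S_3)$, followed by Lemma \ref{lem:1Per}. The only difference is organizational: the paper first splits off the case $S_1=S_2=S_3$ (handled via Theorem \ref{thm:c1vgood}) and then derives a contradiction from the contrapositive of Lemma \ref{lem:1Per} when some subspace differs, whereas you run the argument directly --- the dimension pinch forces $\dim P(S_2,S_3)=1$, the forward direction of Lemma \ref{lem:1Per} identifies $S_2=S_3=\{\vec{e}_i\}^{\perp}$, and the containment $P(S_2,S_3)=\spn\{\vec{e}_i\}\subseteq S_1^{\perp}$ pins down $S_1$. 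This is a mild streamlining that avoids both the case split and the appeal to Theorem \ref{thm:c1vgood}; the one fact you take for granted, $\dim P(S_2,S_3)\geq 1$, is exactly what the paper also asserts from Lemma \ref{lem:i2}, so nothing is missing.
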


\begin{proof}
    If the three subspaces are identical, it is clear they cannot be of full dimension, and so the claim follows from the classification of codimension-$1$ \vgood subspaces, Theorem \ref{thm:c1vgood}. Assume for contradiction the three spaces are not identical, and without loss of generality that $S_3$ is different from both $S_1$ and $S_2$ (while $S_1$ and $S_2$ may or may not be the same subspace).

    Since $S_1, S_2, S_3$ are \frnd, for any choice of $\vec{x} \in S_1$, $\vec{u} \in S_2$ and $\vec{v} \in S_3$, we have
    $$\per\begin{mx}\vec{x}&\vec{u}&\vec{v}\end{mx} = 0.$$
    In other words, $\vec{x} \in \left\{\Per_{3\times 2}\begin{mx}\vec{u} & \vec{v}\end{mx}\right\}^\perp$, or more generally $S_1 \perp P(S_2, S_3)$. This is a contradiction since $\dim S_1 \geq 2$ and by Lemma \ref{lem:1Per}, $\dim P(S_2, S_3) \geq 2$.
\end{proof}

\subsection{A graph-theoretic lemma}
The inductive step of the proof of the classification of \frnd subspaces (Theorem \ref{thm:manyfriends}) relies on identifying a suitable coordinate to delete, so the inductive hypothesis can be invoked for the resulting vectors. In order to guarantee the existence of such a coordinate we need a graph-theoretic lemma. 

\begin{defi}
    Let $G = (V, E)$ be a bipartite graph with partition $V = A \dunion B$. We say $G$ is a \emph{$B$-star} if $B$ has a unique vertex of positive degree---called the \emph{centre} of the star. (See Figure \ref{fig:graphs}.)
\end{defi}

We now prove that if the original graph $G$ has no isolated vertices, then it is possible to delete an $A$-vertex and a $B$-vertex such that the resulting induced graph is not a $B$-star.

\begin{restatable}[Star-avoidance lemma]{lem}{nostar}
\label{lem:nostar}
    Let $G = (V, E)$ be a bipartite graph with partition $V = A \dunion B$ such that
\begin{description}
    \item[(P1)] $\crd{A}, \crd{B} \geq 4$;
    \item[(P2)] every $a \in A$ has a positive degree;
    \item[(P3)] every $b \in B$ has a positive degree.
\end{description}
Then it is possible to delete two vertices, $a \in A$ and $b \in B$, such that:
\begin{description}
    \item[(R1)] $b$ is not the unique neighbour of $a$ in $G$;
    \item[(R2)] after deleting $a$ and $b$, the resulting induced graph $G'$ is neither empty nor a $B$-star.
\end{description}
\end{restatable}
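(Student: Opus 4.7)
The plan is to reformulate condition (R2) as a cardinality inequality in terms of a single statistic attached to $a$, and then to run a short case analysis on how many $B$-vertices have degree $\geq 2$.

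\emph{Reformulation.} For $a \in A$, let $L_a := \{b' \in B : N_G(b') = \{a\}\}$ be the set of $B$-leaves attached only to $a$; these sets are pairwise disjoint. By (P3), a vertex $b' \in B \setminus \{b\}$ survives with positive degree in $G' = G - a - b$ iff $b' \notin L_a$. So the positive-degree $B$-vertices of $G'$ are exactly $B \setminus (L_a \cup \{b\})$, and the conjunction ``neither empty nor a $B$-star'' becomes the single inequality $|L_a \cup \{b\}| \leq |B| - 2$. This is unachievable for any $b$ when $|L_a| \geq |B| - 1$, and is easy otherwise (pick $b \in L_a$ when $L_a$ is non-empty, or any $b$ when $L_a$ is empty). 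Condition (R1) only asks that either $\deg_G(a) \geq 2$, or $\deg_G(a) = 1$ and $b$ differs from $a$'s unique neighbour.

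\emph{Case analysis.} Let $B^* := \{b' \in B : \deg_G(b') \geq 2\}$. If $|B^*| \geq 2$, then $L_a$ avoids $B^*$ for every $a$, so $|L_a| \leq |B| - 2$ always, and (R2) is automatic for suitable $b$; (R1) is then arranged either by using an $a$ of degree $\geq 2$, or, if every $a$ has degree $1$, by using (P1) to pick $b \in B \setminus (\{b_1, b_2\} \cup N_G(a))$ for some fixed $b_1, b_2 \in B^*$. If $|B^*| \leq 1$, pairwise disjointness of the $L_a$'s together with (P2) and $|A|\geq 4$ show that at most one $a_0 \in A$ can satisfy $|L_{a_0}| \geq |B| - 1$, and that this is possible only when $|B^*| = 1$. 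When no such $a_0$ exists, $|L_a| \leq |B|-2$ for all $a$, and a direct choice works exactly as above.

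\emph{The main obstacle} is the remaining exceptional case $|B^*| = \{b^*\}$ together with $|L_{a_0}| = |B| - 1$ for a unique $a_0$: geometrically $a_0$ dominates almost all of $B$ while $b^*$ is the sole hub for the rest of $A$. Here $\sum_{a' \neq a_0} |L_{a'}| = 0$ combined with (P2) forces every $a' \neq a_0$ to have $b^*$ as its unique neighbour. I would pick any $a \in A \setminus \{a_0\}$ (at least three such vertices exist by (P1)) and let $b$ be any leaf in $L_{a_0}$: then $b \neq b^*$ yields (R1), while $L_a \cup \{b\} = \{b\}$ has size $1 \leq |B| - 2$, yielding (R2). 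This ``crossed'' deletion, taking $a$ from the $b^*$-side and $b$ from the $a_0$-side, is the only place where the full strength of $|A|, |B| \geq 4$ is essential.
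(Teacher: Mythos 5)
Your argument is correct, and it takes a genuinely different route from the paper's. The paper makes a single extremal choice --- $b$ a minimum-degree vertex of $B$, and $a$ a minimum-degree vertex of $A\setminus\Gamma(b)$ (after dispatching the complete bipartite case) --- and then derives a contradiction from the assumption that $G'$ is empty or a $B$-star. You instead characterize, for each fixed $a$, exactly which $b$ make (R2) hold: by (P3) the positive-degree $B$-vertices of $G'$ are precisely $B\setminus(L_a\cup\{b\})$, so (R2) is the inequality $|L_a\cup\{b\}|\le|B|-2$, and the only genuinely obstructed choices of $a$ are those with $|L_a|\ge|B|-1$, of which pairwise disjointness of the $L_a$'s permits at most one. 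This buys a complete description of the near-counterexamples (the ``double star'' with hubs $a_0$ and $b^*$) and makes visible exactly where $|A|,|B|\ge 4$ is used, at the cost of a longer case analysis; the paper's argument is shorter but less transparent about which configurations are dangerous.

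Two spots in your write-up deserve tightening, though neither is a real gap. First, in the sub-case $|B^*|\le 1$ with $|L_a|\le|B|-2$ for all $a$, ``exactly as above'' cannot be taken literally, since your degree-one recipe from the first case invokes $b_1,b_2\in B^*$; the fix is immediate: if every $a$ has degree $1$ then $L_a\subseteq N_G(a)$ is at most a singleton, so $|L_a\cup\{b\}|\le 2\le|B|-2$ for \emph{every} $b$, and any $b\notin N_G(a)$ (which exists since $|B|\ge 2$) also gives (R1). Second, in the exceptional case, the stated reason ``$\sum_{a'\ne a_0}|L_{a'}|=0$ plus (P2)'' does not by itself force $N_G(a')=\{b^*\}$; the correct and equally immediate reason is that $B\setminus\{b^*\}=L_{a_0}$ consists of vertices adjacent only to $a_0$, so $N_G(a')\subseteq\{b^*\}$ and (P2) gives equality. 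With these two sentences supplied, the proof is complete.
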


\mbox{}

\begin{figure}[H]
\centering
\includegraphics[scale=0.3]{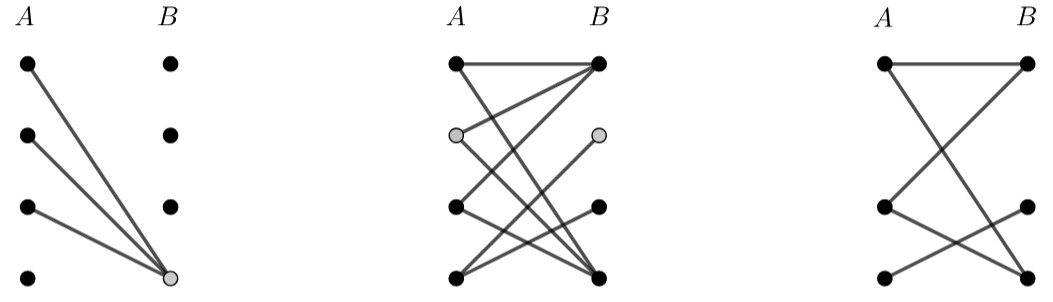}
\caption{\tiny (Left) Example of a $B$-star, the lighter vertex is the centre.
(Middle) A bipartite graph; the two lighter vertices are the ones marked for removal in the proof below. Note that removing the top $B$-vertex and the bottom $A$-vertex creates the star on the left.
(Right) The induced subgraph resulting from removing the two lighter vertices of the graph in the middle. }
\label{fig:graphs}
\end{figure}

\mbox{}

\begin{proof}
    Note that if $G$ is a full bipartite graph, then we may take any vertex in $B$ for $b$ and any vertex in $A$ for $a$; so we may assume that $G$ is not full.
    
    We now claim that we may take $b$ to be a minimum degree vertex in $B$ and $a$ a minimum degree vertex outside the neighbourhood $\Gamma(b)$ of $b$, i.e.~a minimum degree vertex among $\stt{a \in A}{ab\notin E}$. Clearly, the choice of $b$ and $a$ satisfies condition (R1); it remains to show that it also satisfies condition (R2).

    First note that $G'$ is not empty, for otherwise it means that every vertex in $A$ other than $a$ has $b$ as its unique neighbour, and this contradicts the assumption that $b$ is a minimal-degree vertex in $B$.

    Next, assume for contradiction that $G'$ is a $B$-star with centre $r$. Since $\crd{B}\geq 4$, we know that in addition to $b$ and $r$ there are at least two more vertices $s, t \in B$. Likewise, since $\crd{A}\geq 4$, in addition to $a$, there are at least three more vertices $x, y, z \in A$.

    Since $G'$ is a $B$-star with centre $r$, we have $\deg_{G'}(s) = \deg_{G'}(t) = 0$. Since by (P2) $\deg_G(s), \deg_G(t) > 0$, we see that $as, at \in E$, so that $\deg_G(a) \geq 2$. By the choice of $a$, every vertex $a' \in A$ not connected to $b$ has $\deg_G(a')\geq 2$. In particular, if $x, y, z$ are not connected to $b$, then $2\leq \deg_G(x) = \deg_{G'}(x)$ and similarly for $y, z$. However, since $G'$ is a star, we have $\deg_{G'}(x) = \deg_{G'}(y) = \deg_{G'}(z) \leq 1$. This proves that $xb, yb, zb \in E$.

    We conclude that $\deg_G(b) \geq 3$. By the choice of $b$, every vertex $b' \in B$ has $\deg_G(b') \geq 3$. In particular, $\deg_{G}(s) \geq 3$. We now have the contradiction $0 = \deg_{G'}(s) \geq \deg_G(s)-1 = 2$.
\end{proof}

\subsection{Putting everything together}
Having completed the necessary groundwork, we are ready to prove our main structural result.

\manyfriends

\begin{proof}
  By induction on $n$. The base case $n=3$ is Lemma \ref{lem:3friends}. Suppose we have proved the theorem for some $n\geq 3$ and we wish to prove it for $n+1$. Towards that end, let $S_1, \ldots, S_{n+1} \leq \F_q^{n+1}$ be a \frnd list of subspaces, each of dimension $\geq n$ and assume for contradiction that they are not all of the form $\{\vec{e}_i\}^\perp$ for the same $i$. To derive a contradiction, we show how to construct an $(n+1)\times(n+1)$ matrix $A$ whose $j$-th column is an element of $S_j$ such that $\per A \neq 0$, contradicting the assumption that $S_1, \ldots, S_{n+1}$ is a \frnd list.

  We may assume without loss of generality that each $S_j$ is of codimension $1$ (otherwise, $S_j = \F_q^{n+1}$ and we replace it with a subspace of codimension $1$). In particular, we may assume that each $S_j = \{\vec{v}_j\}^\perp$ for some nonzero $\vec{v}_j\in \F_q^{n+1}$, and we may choose each $\vec{v}_j$ such that its first nonzero coordinate is $1$. By hypothesis, there is no $1 \leq i \leq n+1$ such that all the $\vec{v}_j$ equal $\vec{e}_i$.

  \begin{description}
    \item[Case] Consider first the possibility that all the $\vec{v}_j$ have some common coordinate that is $0$; say the last coordinate. Let $\widehat{\vec{v}}_j \in \F_q^n$ be the vector $\vec{v}_j$ with the last coordinate removed, and let $\widehat{S}_j = \{\widehat{\vec{v}}_j\}^\perp$. Then $\widehat{S}_1, \ldots, \widehat{S}_n$ is a \frnd list.

    To see this, suppose $\widehat{A}$ is a matrix with $j$-th column $\widehat{\vec{a}}_j \in \widehat{S}_j$, for $1 \leq j \leq n$. We build a matrix $A$ whose $j$-th column is in $S_j$ and such that $\per \widehat{A} = \per A$ which vanishes by the hypothesis that $S_1, \ldots, S_{n+1}$ is a \frnd list. For $1 \leq j \leq n$, the $j$-th column of $A$ is $\tr{(\widehat{\vec{a}}_j, 0)}$, and the $(n+1)$-st column of $A$ is $\vec{e}_{n+1}$.

    This proves that $\widehat{S}_1, \ldots, \widehat{S}_n$ is a \frnd list, so the induction hypothesis implies that there exists some $1 \leq i \leq n$ such that $\widehat{\vec{v}}_j = \vec{e}_i \in \F_q^n$ for all $1 \leq j \leq n$. 

    Repeating the same argument with $\widehat{S}_2, \ldots, \widehat{S}_{n+1}$ we find that $\widehat{\vec{v}}_{n+1} = \vec{e}_i$. Since the last coordinate of $\vec{v}_j$ is $0$ (for $1 \leq j \leq n+1$), we see that $\vec{v}_j = \vec{e}_{i}\in \F_q^{n+1}$, contradicting the assumption that there is no $1 \leq i \leq n+1$ such that all the $S_j$ equal $\{\vec{e}_i\}^\perp$.

    \item[Case] the $\vec{v}_j$'s do not have a common coordinate which is $0$. In other words, for every coordinate $1 \leq i \leq n+1$ there is some $\vec{v}_j$ whose $i$-th coordinate is nonzero. 

    Consider the bipartite graph $G = (V, E)$ with partition $V = A \dunion B$ where $A \eqdef \{\vec{v}_1, \ldots, \vec{v}_{n+1}\}$ is the set of vectors and $B \eqdef \{1, 2, \ldots, n+1\}$ the set of coordinates; a vector is connected by an edge to its support, i.e.~$\vec{v}_jk \in E$, if and only if the $k$-th coordinate of $\vec{v}_j$ is nonzero. Note that the three premises of the Star-avoidance Lemma (Lemma \ref{lem:nostar}) are satisfied:
  \begin{description}
    \item[(P1)] $\crd{A}, \crd{B} \geq 4$, since $n+1\geq 4$;
    \item[(P2)] every $a \in A$ has a positive degree, since the vectors $\vec{v}_j$ are all nonzero;
    \item[(P3)] every $b \in B$ has a positive degree, since the vectors $\vec{v}_j$ do not have a common coordinate zero.
  \end{description}

  By the Star-avoidance Lemma it is possible to find a coordinate $b \in B$ and a vector $\vec{v} \in A$ such that 
    \begin{description}
        \item[(R1)] $b$ is not the unique neighbour of $a$ in $G$, i.e.~$\vec{v} \neq \vec{e}_b$;
        \item[(R2)] after deleting $a$ and $b$, the resulting induced graph $G'$ is neither empty nor a $B$-star. In other words, letting $\widehat{\vec{v}}_j \in \F_q^n$ denote the vector resulting from $\vec{v}_j$ by deleting the $b$-th coordinate, there is no $1 \leq i \leq n$ such that all of the $\widehat{\vec{v}}_j$ (not counting $\widehat{\vec{v}}$) are in the span of $\vec{e}_i^n$.
    \end{description}
    For ease of writing, we shall assume that $\vec{v} = \vec{v}_{n+1}$ and $b = n+1$.
    
    Letting $\widehat{S}_j \eqdef \{\widehat{\vec{v}}_j\}^\perp$ (for $1\leq j\leq n$), the induction hypothesis together with (R2) imply that the $\widehat{S}_1, \ldots, \widehat{S}_n$ is \emph{not} a \frnd list. In particular, it is possible to find vectors $\widehat{\vec{a}}_j \in \widehat{S}_j$ (for $1 \leq j\leq n$) such that the matrix $\widehat{A}$ whose $j$-th column is $\widehat{\vec{a}}_j$ ($1 \leq j \leq n$) has $\per \widehat{A} \neq 0$. We shall now construct a matrix $A$ whose $j$-th column $\vec{a}_j$ in $S_j$ ($1 \leq j \leq n+1$) and such that $\per A = \per \widehat{A}$, contradicting the assumption that $S_1, \ldots, S_{n+1}$ is a \frnd list.
    
    For $1 \leq j \leq n$ we define $\vec{a}_j\eqdef (\widehat{\vec{a}}_j, 0)$ (for $1 \leq j \leq n$) and note that $\vec{a}_j \in S_j$. For $\vec{a}_{n+1}$ we take any vector in $S_{n+1}$ with $1$ in the $b$-th coordinate; note that this is possible since by (R1) we know that $\vec{v}\neq \vec{e}_b$. Expanding by the $b$-th row we have $\per A = \per \widehat{A} \neq 0$.
  \end{description}
\end{proof}

We would like to draw the reader's attention to the fact that the only property of the permanent that is used in the proof above is the cofactor expansion formula. As this formula is shared with the determinant, the inductive step would have also worked for the determinant, even though the result is false for the determinant --- as we remarked in the introduction, it is the base case that makes all the difference.

\section{\vgood subspaces in sufficiently large characteristic}\label{sec:0fld}
Jointly-permanull subspaces of codimension $1$ were a crucial component in our proof of Theorem \ref{thm:main}. Understanding \frnd subspaces of larger codimension looks like a very interesting problem, and seems relevant to extending the range of $k$ in that theorem.

In this section and the next, we make some small steps in this direction. We begin with a very simple proof showing that for fields of characteristic $0$ or $> n$, the only \vgood subspaces of $\F^n$ are the trivial ones, i.e.~subspaces of $\{\vec{e}_i\}^\perp$ for some $i$ (Theorem \ref{thm:simpletrivial}). 

With a more refined argument, we show how to do better for \vgood subspaces of bounded codimension; we prove that this characterization continues to hold with the weaker hypothesis $\chr{\F} > k+1$ where $k$ is the codimension of the \vgood subspace, for any ambient dimension $n$ (Theorem \ref{thm:trivialpermanull}). This generalizes Theorem~\ref{thm:c1vgood}, the characterization for the $k = 1$ case, where the only hypothesis was that $\chr{\F} > 2$.

Once $\chr{\F} \leq k+1$, there can be nontrivial \vgood spaces of codimension $k$. In Section \ref{sec:poly} we go beyond the $\chr{\F}>k+1$ restriction and give a characterization of \vgood subspaces in terms of the vanishing of the \emph{permanental polynomial} (Theorem \ref{thm:zeropoly}). This also allows gives a computational criterion for determining whether a given subspace is \vgood. We conclude the paper by showing how to use this polynomial to give an alternative, simplified proof of Theorem \ref{thm:trivialpermanull}.

\subsection{A simple proof for $\chr{\F} > n$}

We start with the following observation, which is an immediate consequence of the definition of \vgood subspaces.
\begin{lem}\label{lem:downclosed}
    Any subspace of a \vgood subspace is a \vgood subspace.
\end{lem}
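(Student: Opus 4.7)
The plan is to unpack the definitions and observe that the containment of column spaces immediately transfers the defining property. Specifically, let $T \leq S \leq \F^n$ with $S$ being \vgood, and let $A$ be any $n \times n$ matrix whose columns all lie in $T$. Since $T \leq S$, every column of $A$ also lies in $S$. The hypothesis that $S$ is \vgood then forces $\per A = 0$. Since $A$ was an arbitrary matrix with columns in $T$, this is precisely the statement that $T$ is \vgood.

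There is no real obstacle here: the lemma is a direct monotonicity property of the definition, requiring no computation, no use of the characteristic hypothesis, and no appeal to the structural theorems developed earlier. The only thing to check is that we have stated it for genuine subspaces (not merely subsets) so that the matrices with columns in $T$ form a subfamily of the matrices with columns in $S$, and this is immediate. I would write the proof as a single short paragraph, with no case analysis and no induction.
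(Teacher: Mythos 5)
Your proof is correct and matches the paper's, which simply records the lemma as immediate from the definition; your unpacking (columns in $T$ lie in $S$, so the permanent vanishes) is exactly the intended argument.
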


\begin{proof}
	Immediate from the definition of \vgood subspaces.
\end{proof}

The necessity of some assumption on the characteristic is apparent in the proof of the following easy lemma.

\begin{lem}\label{lem:dim1vgood}
    Let $V \leq \F^n$ be a \vgood subspace of dimension $1$. If $\F$ is of characteristic $0$ or $\chr\F > n$, then $V\subseteq \{\vec{e}_i\}^\perp$ for some $1 \leq i \leq n$.
\end{lem}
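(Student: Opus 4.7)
The plan is very direct: exploit the single constant matrix with all columns equal to a generator of $V$, whose permanent has a clean closed form.

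Write $V = \spn\{\vec{v}\}$ for some nonzero $\vec{v} = \tr{(v_1, \ldots, v_n)} \in \F^n$. Consider the $n \times n$ matrix $A$ whose every column equals $\vec{v}$; that is, $A_{ij} = v_i$ for all $j$. All columns of $A$ lie in $V$, so \vgoodness of $V$ forces $\per A = 0$. On the other hand, a direct computation from the definition of the permanent gives
\[
\per A \;=\; \sum_{\sigma \in S_n}\prod_{i=1}^n A_{i,\sigma(i)} \;=\; \sum_{\sigma \in S_n}\prod_{i=1}^n v_i \;=\; n!\,\prod_{i=1}^n v_i.
\]

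The hypothesis on the characteristic is exactly what is needed to convert the vanishing of $\per A$ into the vanishing of the product: because $\chr\F = 0$ or $\chr\F > n$, the integer $n!$ is invertible in $\F$, so $\prod_{i=1}^n v_i = 0$. Therefore some coordinate $v_i$ of $\vec{v}$ equals $0$, and every scalar multiple of $\vec{v}$ shares this zero coordinate, giving $V \subseteq \{\vec{e}_i\}^\perp$ for that $i$.

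There is really no obstacle here: the only nontrivial ingredient is the closed-form evaluation $\per A = n!\,\prod_i v_i$, which is essentially by inspection, and the characteristic hypothesis is used in precisely the one place where one needs to cancel $n!$. I will state and use only this single test matrix rather than invoking Lemma \ref{lem:downclosed} or any structural tool, keeping the argument as short as possible; the downward closure lemma will be more useful in the higher-dimensional characterization that follows.
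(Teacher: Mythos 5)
Your proof is correct and is essentially identical to the paper's: both evaluate the permanent of the matrix whose columns all equal a generator $\vec{v}$ of $V$ as $n!\,\prod_i v_i$ and use the characteristic hypothesis to make $n!$ invertible, forcing some coordinate of $\vec{v}$ to vanish. The only difference is presentational (you argue directly, the paper argues the contrapositive).
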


\begin{proof}
    Any vector $\vec{v}$ all of whose coordinates are nonzero induces a matrix
    \[\begin{mx}
        \vec{v} & \vec{v} & \cdots & \vec{v}
    \end{mx}\]
    with nonzero permanent. Indeed, if $\vec{v} = (v_1, \ldots, v_n)$, then by multilinearity
    \[\per\begin{mx}
        \vec{v} & \vec{v} & \cdots & \vec{v}
    \end{mx} = v_1\cdots v_n\cdot \per\begin{mx}
        \vec{1} & \vec{1} & \cdots & \vec{1}
    \end{mx} = v_1\cdots v_n\cdot n!\]
    (where $\vec{1}$ is the vector all of whose entries are $1$). Since $v_1, \ldots, v_n$ are nonzero, this product is nonzero as long as $n! \neq 0$, i.e.~as long as $\chr\F = 0$ or $\chr\F \geq n+1$.
\end{proof}

\begin{crl}\label{crl:0coord}
    Let $V \leq \F^n$ be a \vgood subspace. If $\F$ is of characteristic $0$ or $\chr\F > n$, then every $\vec{v} \in V$ has at least one zero coordinate.
\end{crl}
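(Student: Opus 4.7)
The plan is to deduce this corollary as an immediate consequence of the two preceding lemmas (down-closedness of permanullness, Lemma \ref{lem:downclosed}, and the classification of $1$-dimensional \vgood subspaces in large characteristic, Lemma \ref{lem:dim1vgood}).

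Given any $\vec{v} \in V$, I would first handle the trivial case $\vec{v} = \vec{0}$ (every coordinate is zero, so there is nothing to prove). Assuming $\vec{v} \neq \vec{0}$, I would consider the one-dimensional subspace $W = \spn\{\vec{v}\} \leq V$. Since $W \subseteq V$ and $V$ is \vgood, Lemma \ref{lem:downclosed} gives that $W$ is itself \vgood. Now $W$ is a $1$-dimensional \vgood subspace of $\F^n$, and the characteristic hypothesis $\chr\F = 0$ or $\chr\F > n$ is in force, so Lemma \ref{lem:dim1vgood} applies and yields some coordinate $1 \leq i \leq n$ with $W \subseteq \{\vec{e}_i\}^\perp$. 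In particular $\vec{v} \in \{\vec{e}_i\}^\perp$, i.e.~the $i$-th coordinate of $\vec{v}$ vanishes, which is exactly the conclusion.

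There is no real obstacle here; the only thing to be careful about is that the coordinate $i$ may depend on the vector $\vec{v}$ (the corollary only asserts the existence of \emph{some} zero coordinate for each individual $\vec{v}$, not a uniform $i$ working for all of $V$). Strengthening this to a uniform $i$ is exactly the content of the main theorem of this section, Theorem \ref{thm:trivialpermanull}, and would require additional work beyond what is available at this point in the paper.
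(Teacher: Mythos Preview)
Your proof is correct and follows exactly the same route as the paper: apply Lemma~\ref{lem:downclosed} to $\spn\{\vec{v}\}$ and then invoke Lemma~\ref{lem:dim1vgood}. Your explicit treatment of the case $\vec{v}=\vec{0}$ and your closing remark about the coordinate $i$ depending on $\vec{v}$ are welcome clarifications, but the argument is identical.
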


\begin{proof}
    By Lemma \ref{lem:downclosed}, $\spn\{\vec{v}\}$ is a \vgood subspace so by Lemma \ref{lem:dim1vgood} it must lie within $\{\vec{e}_i\}^\perp$ for some $1 \leq i \leq n$.
\end{proof}

Using this corollary, we may prove directly that for fields of characteristic zero (or sufficiently large characteristic) the only \vgood spaces are the trivial ones.

\begin{thm}
    Let $V \leq \F^n$ be a \vgood space. If $\F$ is of characteristic $0$ then $V \subseteq \{\vec{e}_i\}^{\perp}$ for some $1 \leq i \leq n$.
\end{thm}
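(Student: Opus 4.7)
The plan is to bootstrap from Corollary \ref{crl:0coord}, which already tells us that in characteristic $0$ every $\vec{v} \in V$ has at least one zero coordinate. This is close to the conclusion we want: we must merely upgrade \emph{``each vector has some zero coordinate''} (a per-vector statement) to \emph{``all vectors share a common zero coordinate''} (a statement about $V$).

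The key observation is that for each coordinate $1 \leq i \leq n$ the set $V_i \eqdef V \cap \{\vec{e}_i\}^\perp$ is a linear subspace of $V$, and Corollary \ref{crl:0coord} says precisely that
\[
    V \;=\; \bigcup_{i=1}^{n} V_i.
\]
First I would reduce the problem to the classical fact that a vector space over an infinite field cannot be written as a finite union of proper subspaces. Since $\chr\F = 0$ forces $\F$ to contain $\mathbb Q$ and hence to be infinite, this classical lemma applies, and at least one of the $V_i$ must fail to be proper in $V$; that is, $V = V_i$ for some $i$, which is exactly the conclusion $V \subseteq \{\vec{e}_i\}^\perp$.

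The only step that needs any real argument is that standard union-of-subspaces lemma, which I would either cite or reprove in one line: if $V = V_1 \cup \cdots \cup V_n$ with each $V_i \subsetneq V$, then picking $\vec{w}_i \in V \setminus V_i$ and using infinitely many scalars $\lambda \in \F$ would produce, for any two indices $i \neq j$, infinitely many vectors $\vec{w}_i + \lambda \vec{w}_j$, only finitely many of which can lie in any single $V_k$ (by linearity), yielding a contradiction for $n$ small enough relative to $|\F|$; since $|\F| = \infty$, this works for any finite $n$.

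I do not expect any real obstacle here: the heavy lifting is done by Lemma \ref{lem:dim1vgood} and Corollary \ref{crl:0coord}, which already exploit the characteristic hypothesis via the appearance of $n!$ in the permanent of an all-ones matrix. The leap from a pointwise statement to a uniform one is purely a linear-algebra fact about infinite fields, and it is exactly this leap that would \emph{fail} over small finite fields, where genuinely nontrivial \vgood subspaces exist and motivate the more delicate analysis in Section \ref{sec:poly}.
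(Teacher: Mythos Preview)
Your proposal is correct, and it takes a genuinely different route from the paper's own proof. The paper also starts from Corollary~\ref{crl:0coord}, but instead of invoking the abstract ``a vector space over an infinite field is not a finite union of proper subspaces'' lemma, it constructs an explicit one-parameter curve $\vec{v}_1 + x\vec{v}_2 + \cdots + x^{\ell-1}\vec{v}_\ell$ through a basis of $V$, observes that each coordinate is a nonzero polynomial in $x$ of degree at most $\ell-1$, and uses root-counting to find a value of $x$ (requiring only $\crd{\F} > n(n-1)$) for which all coordinates are nonzero --- contradicting Corollary~\ref{crl:0coord} directly.

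Your argument is cleaner and more conceptual: once you notice that Corollary~\ref{crl:0coord} says exactly $V = \bigcup_i V_i$, the conclusion drops out of a standard lemma. The paper's argument is more hands-on but has the advantage of making the quantitative dependence on $\crd{\F}$ explicit, which is what the paper then sharpens (via a probabilistic variant) to obtain Theorem~\ref{thm:simpletrivial} for finite fields with $\chr\F > n$. It is worth noting that your covering-lemma approach would also yield that sharper result, since the quantitative form of the lemma (a vector space over $\F$ cannot be covered by fewer than $\crd{\F}+1$ proper subspaces) already gives the bound $\crd{\F} \geq n$, matching the paper's probabilistic argument. One small quibble: your one-line sketch of the covering lemma is a bit loose --- the cleanest version picks $\vec{u} \in V_1 \setminus \bigcup_{i\geq 2} V_i$ and $\vec{v} \notin V_1$ and observes that the infinitely many vectors $\vec{v} + \lambda\vec{u}$ avoid $V_1$ and hit each remaining $V_i$ at most once --- but the result is standard enough that a citation would suffice.
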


\begin{proof}
    Assume for contradiction that $V\not\subseteq \{\vec{e}_i\}^\perp$ for any $1 \leq i \leq n$. Fix some basis $\vec{v}_1, \ldots, \vec{v}_\ell$ for $V$; then for any coordinate $1 \leq i \leq n$ at least one of these basis vectors has a nonzero $i$-th coordinate. Consider the $n$ polynomials defined by the $n$ coordinates of the vector
    \[\vec{v}_1 + x\vec{v}_2 + x^2\vec{v}_3 + \cdots + x^{n-1}\vec{v}_n.\]
    Let $Z$ be the set of field elements for which at least one of the above polynomials vanishes. Since each polynomial equation in this system has at most $n-1$ roots in $\F$, we have $\crd{Z} \leq n(n-1)$. Therefore, if $\crd{\F}>n(n-1)$ there is some $x$ for which none of these polynomials vanishes, i.e., 
    \[\vec{v}_1 + x\vec{v}_2 + \cdots + x^{n-1}\vec{v}_n\]
    is a vector of the \vgood subspace $V$ none of whose entries is $0$, contradicting Corollary \ref{crl:0coord}.
\end{proof}

Let us focus again on finite fields. The preceding proof shows that if $\chr\F > n$ (a necessary hypothesis for Corollary \ref{crl:0coord}) and $\crd{\F}>n(n-1)$, then any \vgood subspace of $\F^n$ is trivial. We can circumvent the restriction on the field size $\crd{\F}$ via a straightforward application of the probabilistic method. We make use of the following simple observation.

\begin{lem}
    Let $\F$ be a finite field and let $x_1, \ldots, x_\ell \in \F$ be some fixed field elements, at least one of which is nonzero. If $\alpha_1, \ldots, \alpha_\ell \in \F$ are selected uniformly at random, then
    \[x:= \alpha_1x_1 + \cdots + \alpha_\ell x_\ell\]
    is a uniformly random field element; that is, for every $\alpha \in \F$, $\Pr[x = \alpha] = 1/\crd{\F}$.
\end{lem}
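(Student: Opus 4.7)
The plan is to reduce the problem to the elementary fact that for a nonzero $a \in \F$ the affine map $\alpha \mapsto a\alpha + c$ is a bijection of $\F$ onto itself, and therefore sends the uniform distribution on $\F$ to itself.

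First, I would reindex so that the nonzero element is $x_1$; by hypothesis at least one $x_i$ is nonzero, so up to permuting the indices (which does not change the distribution of $x$ since the $\alpha_i$ are i.i.d.~uniform) we may assume $x_1 \neq 0$. Next, I would condition on the values of $\alpha_2, \ldots, \alpha_\ell$. Writing $c \eqdef \alpha_2 x_2 + \cdots + \alpha_\ell x_\ell$ for this fixed value, the conditional distribution of $x$ becomes the distribution of $\alpha_1 x_1 + c$ where $\alpha_1$ is uniform on $\F$.

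The map $\phi : \F \to \F$ given by $\phi(\alpha) = \alpha x_1 + c$ is a bijection, since $x_1 \neq 0$ means multiplication by $x_1$ is invertible and translation by $c$ is invertible; hence $\phi$ pushes the uniform distribution on $\F$ forward to the uniform distribution on $\F$. Therefore, conditional on any choice of $\alpha_2, \ldots, \alpha_\ell$, the random variable $x$ is uniform on $\F$. Taking the expectation over $\alpha_2, \ldots, \alpha_\ell$ via the law of total probability, for every $\alpha \in \F$ we obtain $\Pr[x = \alpha] = 1/\crd{\F}$, as claimed.

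There is essentially no obstacle here; this is a one-line argument and the only mild point to be careful about is that when the nonzero coordinate is not $x_1$, one must either rename it or condition on the other $\alpha_j$'s while singling out the one multiplying a nonzero $x_i$.
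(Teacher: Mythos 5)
Your proof is correct. It takes a mildly different route from the paper's: the paper argues globally, observing that for each $\alpha \in \F$ the solution set $\left\{(\alpha_1,\ldots,\alpha_\ell) \in \F^\ell : \sum_i \alpha_i x_i = \alpha\right\}$ is a coset of the kernel of the nonzero linear functional $(\alpha_1,\ldots,\alpha_\ell) \mapsto \sum_i \alpha_i x_i$, so all level sets have equal cardinality and the probability is obtained by counting. You instead condition on $\alpha_2,\ldots,\alpha_\ell$ and use that $\alpha \mapsto \alpha x_1 + c$ is a bijection of $\F$, then average via the law of total probability. Both arguments are one-liners exploiting the same linearity, and your conditioning bookkeeping (singling out an index with $x_i \neq 0$) is handled correctly. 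One small advantage of your version is that it never needs to identify the dimension of the kernel; the paper's write-up actually asserts that the kernel has dimension $\ell - d$, where $d$ is the number of nonzero $x_i$, whereas the kernel of a nonzero linear functional on $\F^\ell$ always has dimension $\ell-1$. That slip is harmless for the paper's conclusion, since only the coset structure (not the dimension) is used, but your argument avoids the issue entirely.
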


\begin{proof}
    Let $d$ be the number of nonzero elements among $x_1, \ldots, x_\ell$. Then 
    \[\stt{(\alpha_1, \ldots, \alpha_\ell)\in \F^\ell}{\sum_{i=1}^{\ell}\alpha_ix_i = 0}\]
    is an $(\ell-d)$-dimensional linear subspace of $\F^\ell$. Moreover, for any $\alpha \in \F$
    \[\stt{(\alpha_1, \ldots, \alpha_\ell)\in\F^\ell}{\sum_{i=1}^{\ell}\alpha_ix_i = \alpha}\]
    is a coset of that subspace. In particular, all of these sets have the same size.
\end{proof}

\begin{thm}\label{thm:simpletrivial}
    Let $V \leq \F^n$ be a \vgood space. If $\chr\F>n$, then $V \subseteq \{\vec{e}_i\}^{\perp}$ for some $1 \leq i \leq n$.
\end{thm}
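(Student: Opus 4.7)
The plan is to combine Corollary \ref{crl:0coord} with a straightforward application of the probabilistic method, using the uniform distribution lemma proved immediately above the theorem. The point is that the characteristic-zero proof used a Vandermonde-style evaluation, which forced the wasteful requirement $\crd{\F} > n(n-1)$; the probabilistic method replaces this by the much weaker $\crd{\F} > n$, which is automatic when $\chr\F > n$.

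Concretely, I would argue by contradiction. Suppose $V$ is \vgood and not contained in any coordinate hyperplane $\{\vec{e}_i\}^\perp$. Fix a basis $\vec{v}_1, \ldots, \vec{v}_\ell$ of $V$. The hypothesis $V \not\subseteq \{\vec{e}_i\}^\perp$ translates to the statement that for every coordinate $1 \leq i \leq n$ at least one of the $\ell$ scalars $(v_1)_i, \ldots, (v_\ell)_i$ is nonzero.

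Sample $\alpha_1, \ldots, \alpha_\ell \in \F$ independently and uniformly and set
\[\vec{w} \eqdef \alpha_1 \vec{v}_1 + \cdots + \alpha_\ell \vec{v}_\ell \in V.\]
For each fixed $i \in \{1,\ldots,n\}$, the $i$-th coordinate of $\vec{w}$ is the linear combination $\sum_{j=1}^{\ell}\alpha_j (v_j)_i$, whose coefficient tuple $((v_1)_i, \ldots, (v_\ell)_i)$ has at least one nonzero entry. The lemma preceding the theorem then guarantees that this coordinate is distributed uniformly over $\F$, so it equals $0$ with probability $1/\crd{\F}$.

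A union bound over the $n$ coordinates yields
\[\Pr\bigl[\vec{w} \text{ has some zero coordinate}\bigr] \leq \frac{n}{\crd{\F}}.\]
Since $\chr\F > n$ and $\crd{\F}$ is a power of $\chr\F$, we have $\crd{\F} \geq \chr\F > n$, so this probability is strictly less than $1$. Hence there exists at least one choice of $(\alpha_1, \ldots, \alpha_\ell)$ for which $\vec{w} \in V$ has \emph{all} coordinates nonzero, directly contradicting Corollary \ref{crl:0coord}. There is no real obstacle here: the two ingredients (the corollary and the uniform-distribution lemma) have already done the work, and the only thing to check is that the union bound fires under the stated characteristic hypothesis, which it does because $\crd{\F} \geq \chr\F > n$ regardless of how large $\crd{\F}$ actually is.
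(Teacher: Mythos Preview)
Your proof is correct and essentially identical to the paper's: both argue by contradiction, pick a basis, take a uniformly random linear combination, invoke the preceding uniform-distribution lemma coordinatewise, and apply the union bound together with $\crd{\F} \geq \chr\F > n$ to contradict Corollary~\ref{crl:0coord}.
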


\begin{proof}
    Assume for contradiction that $V\not\subseteq \{\vec{e}_i\}^\perp$ for any $1 \leq i \leq n$. Fix some basis $\vec{v}_1, \ldots, \vec{v}_\ell$ for $V$; then for any coordinate $1 \leq i \leq n$ at least one of these basis vectors has a nonzero $i$-th coordinate. If the coefficients of the linear combination
    \[\vec{v}:= \alpha_1\vec{v}_1 + \cdots + \alpha_n\vec{v}_n\]
    are chosen uniformly at random, the $i$-th coordinate of $\vec{v}$ is also uniformly random; in particular, it is $0$ with probability $1/{\crd{\F}}$. Using the union bound, the probability that at least one coordinate of $\vec{v}$ is $0$ is at most $n/{\crd{\F}}$. Since $\crd{\F} \geq \chr{\F}>n$, there exists a linear combination such that all of the coordinates of $\vec{v}$ are nonzero, contradicting Corollary \ref{crl:0coord}.
\end{proof}

\subsection{Improved results for bounded codimension}
We have just seen that under some circumstances, the only \vgood spaces are trivial. The proof of Lemma \ref{lem:dim1vgood} shows that $\chr{\F}>n$ is a necessary condition for \vgood subspaces of dimension $1$. On the other hand, Theorem \ref{thm:c1vgood} on \vgood subspaces of codimension $1$ required only that the characteristic is $> 2$, independent of $n$. This suggests that ``sufficiently large'' characteristic should depend on the codimension $n - \dim V$ rather than on the ambient dimension $n$ alone. Indeed, these two extremes of codimension $1$ and codimension $n-1$ suggest that a suitable bound would be $\chr\F > (n-\dim V)+1$. Our goal for the remainder of the section is to prove such a bound.

The proof proceeds by induction on the ambient dimension $n$. The base case of $n=1$ is trivial, whereas for $n=2$, any nontrivial subspace is of codimension $1$, and so trivial by Theorem \ref{thm:c1vgood}. For the inductive proof, we shall make use of the following claim, where we show that Option (ii) below never holds!

\begin{claim}\label{claim:intersect}
    Suppose some $n \geq 1$ has the following property: for every \vgood subspace $U\leq \F^n$, if $\chr\F>\codim U + 1$, then $U$ is trivial (i.e., a subspace of $\{\vec{e}_i\}^\perp$ for some $1 \leq i \leq n$). Then every \vgood subspace $V \leq \F^{n+1}$ has the following property: if $\chr\F>\codim V+1$ then either:
    \begin{enumerate}[label=(\roman*)]
      \item there exists some index $1 \leq i \leq n+1$ such that $V\subseteq \{\vec{e}_i\}^\perp$; or
      \item for every index $1 \leq i \leq n+1$ we have $\dim(V\cap \{\vec{e}_i\}^\perp) = \dim V-1$. Moreover, there exists a different index $1 \leq j \leq n+1$ (with $j \neq i)$ such that
        \[V\cap \{\vec{e}_i\}^\perp = V\cap \{\vec{e}_j\}^\perp.\]
    \end{enumerate}
\end{claim}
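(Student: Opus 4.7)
The plan is to prove the contrapositive: assuming option~(i) fails, establish~(ii). So suppose $V \not\subseteq \{\vec{e}_i\}^\perp$ for every $1 \le i \le n+1$.

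The dimension assertion in~(ii) is immediate linear algebra. For each $i$, the assumption supplies a vector in $V$ with nonzero $i$-th coordinate, hence $V + \{\vec{e}_i\}^\perp = \F^{n+1}$ and the dimension formula gives $\dim(V \cap \{\vec{e}_i\}^\perp) = \dim V - 1$. Fix now an arbitrary $i$, set $W_i = V \cap \{\vec{e}_i\}^\perp$, and let $\widehat{W}_i \le \F^n$ denote the image of $W_i$ under the projection that deletes the $i$-th coordinate. Since every vector of $W_i$ already has zero $i$-th coordinate, this projection is injective on $W_i$, so $\dim \widehat{W}_i = \dim V - 1$ and the codimension of $\widehat{W}_i$ in $\F^n$ equals the codimension of $V$ in $\F^{n+1}$.

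The core step, and the main structural input, is to show that $\widehat{W}_i$ is itself \vgood as a subspace of $\F^n$. Given any $n \times n$ matrix $\widehat{A}$ with columns $\widehat{\vec{w}}_1, \ldots, \widehat{\vec{w}}_n \in \widehat{W}_i$, lift them to $\vec{w}_1, \ldots, \vec{w}_n \in W_i \subseteq V$, and choose some $\vec{v} \in V$ with $v_i \neq 0$ (possible since (i) fails at coordinate $i$). Form the $(n+1)\times(n+1)$ matrix $A$ whose columns are $\vec{w}_1, \ldots, \vec{w}_n, \vec{v}$. All columns of $A$ lie in $V$, so $\per A = 0$ by hypothesis. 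On the other hand, the $i$-th row of $A$ is $(0, \ldots, 0, v_i)$, the first $n$ entries vanishing because each $\vec{w}_j$ has zero $i$-th coordinate. Cofactor expansion along row~$i$ therefore gives $\per A = v_i \cdot \per \widehat{A}$, and since $v_i \neq 0$ we conclude $\per \widehat{A} = 0$. Hence $\widehat{W}_i$ is \vgood.

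Now the assumed property of $n$ applies: since $\chr \F > \codim V + 1$, which equals the codimension of $\widehat{W}_i$ in $\F^n$ plus $1$, the subspace $\widehat{W}_i$ must be trivial, i.e., $\widehat{W}_i \subseteq \{\vec{e}_k^n\}^\perp$ for some $1 \le k \le n$. Let $j \in \{1, \ldots, n+1\} \setminus \{i\}$ be the coordinate in $\F^{n+1}$ that corresponds to $k$ under the projection. Then every vector of $W_i$ vanishes in its $j$-th coordinate as well, giving $V \cap \{\vec{e}_i\}^\perp = W_i \subseteq V \cap \{\vec{e}_j\}^\perp$; since both sides have dimension $\dim V - 1$ by the first paragraph, they coincide, establishing~(ii) with this choice of $j \neq i$.

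The main obstacle I anticipate lies in the third paragraph. The natural way to relate an $n\times n$ permanent over $\widehat{W}_i$ to an $(n+1)\times(n+1)$ permanent over $V$ would be to place $\vec{e}_i$ as the $(n+1)$-th column, but $\vec{e}_i$ need not lie in $V$. The resolution is that \emph{any} $\vec{v}\in V$ with $v_i\neq 0$ suffices: the zero pattern of $W_i$ collapses the row-$i$ cofactor expansion to a single term regardless of the other entries of $\vec{v}$, which is exactly the leverage that lets the hypothesis on $n$ be invoked.
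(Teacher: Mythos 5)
Your proof is correct and follows essentially the same route as the paper's: both establish $\dim(V\cap\{\vec{e}_i\}^\perp)=\dim V-1$ via the dimension formula, show that the projection of $V\cap\{\vec{e}_i\}^\perp$ onto $\F^n$ (deleting coordinate $i$) is \vgood by padding with a column of $V$ having nonzero $i$-th entry and expanding the permanent along row $i$, and then invoke the hypothesis on $n$ to find the second vanishing coordinate $j\neq i$. The only cosmetic difference is that the paper first disposes of the extreme codimensions $k=1$ and $k=n$ via earlier results, whereas your argument treats all cases uniformly.
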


\begin{proof}
    Suppose $V\subseteq \F^{n+1}$ is a \vgood codimension-$k$ subspace. Note that if $k = 1$, then $V$ is trivial by Theorem \ref{thm:c1vgood}. Similarly, if $k = n$, then the conclusion is satisfied by Lemma \ref{lem:dim1vgood} above. We may therefore assume that $1 < k < n$. In particular, $d:= \dim V = n+1-k > 1$.

    Suppose Option (i) does not hold, i.e., for each index $1 \leq i \leq n+1$ there exists some $\vec{v} \in V$ whose $i$-th coordinate is nonzero. For convenience, we denote $V_i := V\cap \{\vec{e}_i\}^\perp$.

    First, note that since $V\not\subseteq \{\vec{e}_i\}^\perp$, we have $V+\{\vec{e}_i\}^\perp = \F^{n+1}$. Therefore, $\dim V_i = \dim V - 1$.

    Next, let us fix some index $1 \leq i \leq n+1$. Let $\vec{v}_1 \in V$ be a vector whose $i$-th coordinate is nonzero, and by replacing it with a scalar multiple if necessary, we may assume that the $i$-th coordinate is $1$. Using linear combinations, we may complete $\vec{v}_1$ to a basis of $V$ by choosing vectors $\vec{v}_2, \ldots, \vec{v}_{d}$ whose $i$-th coordinate is $0$. Let $\widehat{\vec{v}}_2, \ldots, \widehat{\vec{v}}_{d} \in \F^{n}$ be the vectors obtained from $\vec{v}_2, \ldots, \vec{v}_{d}$ (respectively) by deleting the $i$-th coordinate. We claim that $\widehat{V}:= \spn\{\widehat{\vec{v}}_2, \ldots, \widehat{\vec{v}}_{d}\}$ is a \vgood subspace of $\F^{n}$. This can be seen as follows: any $\widehat{\vec{u}} \in \spn\{\widehat{\vec{v}}_2, \ldots, \widehat{\vec{v}}_{d}\} \subseteq \F^{n}$ is a linear combination of the basis elements
    \[\widehat{\vec{u}} = \sum_{\ell=2}^{d}\alpha_\ell\widehat{\vec{v}}_\ell.\]
    By introducing a new zero coordinate, we obtain the corresponding vector $\vec{u} \in \spn\{\vec{v}_2, \ldots, \vec{v}_{d}\} \subseteq \F^{n+1}$:
    \[\vec{u} = \sum_{\ell=2}^{d}\alpha_\ell\vec{v}_\ell.\]
    If $\widehat{\vec{u}}_2, \ldots, \widehat{\vec{u}}_{n+1} \in \spn\{\widehat{\vec{v}}_2, \ldots, \widehat{\vec{v}}_{d}\}$ then
    \[\per\begin{mx}
        \vec{v}_1 & \vec{u}_2 & \cdots & \vec{u}_{n+1}
    \end{mx} = \per\begin{mx}
        \widehat{\vec{u}}_2 & \cdots & \widehat{\vec{u}}_{n+1}
    \end{mx}\]
    as can immediately be seen by expanding the $(n+1)\times (n+1)$ matrix on the left by its $i$-th row. Since $V = \spn\{\vec{v}_1, \ldots, \vec{v}_{d}\}$ is assumed to be a \vgood subspace, so must be $\widehat{V} = \spn\{\widehat{\vec{v}}_2, \ldots, \widehat{\vec{v}}_{k+1}\}$.

    Since $\widehat{V}$ is a codimension-$k$ \vgood subspace of $\F^n$, the hypothesis of the claim applies: if $\chr\F > k+1$ there is some index $1 \leq j' \leq n$ such the $j'$-th coordinate of $\widehat{\vec{v}}_2, \ldots, \widehat{\vec{v}}_{d}$ is $0$. This means there is some index $1 \leq j \leq n$ different from $i$ such that the $j$-th index of $\vec{v}_2, \ldots, \vec{v}_{d}$ is zero\footnote{More precisely, if $j' \leq i$ then $j = j'+1$ and otherwise $j' = j$.}. Now, $\dim V_i = \dim V_j = d-1$, and $\vec{v}_2, \ldots, \vec{v}_d$ is a list of linearly independent vectors belonging to each of these spaces. Therefore, 
        \[V\cap \{\vec{e}_i\}^\perp = V\cap \{\vec{e}_j\}^\perp\]
    so that Option (ii) holds for $V$, as claimed.
\end{proof}

We are now ready to prove the relaxation on the characteristic.

\begin{thm}\label{thm:trivialpermanull}
    Let $V \leq \F^n$ be a \vgood subspace of codimension $k$. If $\chr\F>k+1$, then $V\subseteq \{\vec{e}_i\}^\perp$ for some $1 \leq i \leq n$.
\end{thm}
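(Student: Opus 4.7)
Plan: The theorem will be proved by induction on the ambient dimension $n$. The base cases $n \leq 2$ are handled by Theorem~\ref{thm:c1vgood} (for the only nontrivial case, namely codimension $1$) together with the trivial observation that $\{\vec{0}\}$ is contained in every hyperplane and that $\F^n$ is itself not \vgood (witness $I_n$). For the inductive step, fix $V \leq \F^{n+1}$ a \vgood subspace of codimension $k$ with $\chr\F > k+1$, and invoke Claim~\ref{claim:intersect}, whose hypothesis is exactly the induction hypothesis for ambient dimension $n$. The claim yields a dichotomy: either Option~(i) holds (which is the desired conclusion) or Option~(ii) holds; the remaining task is therefore to rule out Option~(ii).

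Suppose Option~(ii) holds. The relation $i \sim j \iff V\cap\{\vec{e}_i\}^\perp = V\cap\{\vec{e}_j\}^\perp$ is then an equivalence relation on $[n+1]$ with every class $C_r$ of size $s_r\geq 2$; within each class the coordinates of any $\vec{v}\in V$ are fixed nonzero scalar multiples $\mu_{r,t}$ of the representative coordinate $v_{i_r}$. Let $\pi\colon V\to\F^m$ be the projection onto the $m$ class representatives and set $W := \pi(V)\leq\F^m$. Then $\dim W = d := n+1-k$, and the failure of Option~(i) forces $W\not\subseteq\{\vec{e}_r\}^\perp$ for every $r$. Note also $m \leq (n+1)/2$. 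I will contradict \vgood-ness of $V$ by constructing an $(n+1)\times(n+1)$ matrix $A$ with columns in $V$ and $\per A\neq 0$.

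The construction is of ``block'' type: for each class $r$, I lift a vector $w^{(r)}\in W$ to $\vec{v}^{(r)}\in V$ and take $s_r$ copies of $\vec{v}^{(r)}$ as consecutive columns of $A$. The block outer-product structure of $A$ combined with grouping the permanent sum by the ordered partition $(T_r)_r = (\sigma(C_r))_r$ of $[n+1]$ yields the factorization
\[
\per A \;=\; \prod_{r,t}\mu_{r,t}\cdot\prod_r s_r!\cdot \Phi(w^{(1)},\ldots,w^{(m)}),\qquad \Phi := [y_1^{s_1}\cdots y_m^{s_m}]\prod_{r=1}^{m}\langle y, w^{(r)}\rangle^{s_r}.
\]
The factor $\prod \mu_{r,t}$ is nonzero by construction, and $\prod_r s_r!\neq 0$ since $\max_r s_r \leq 2k-n+1 \leq k+1 < \chr\F$. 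So the contradiction will follow as soon as I find $w^{(r)}\in W$ making the generalized permanental form $\Phi$ nonzero. When $W = \F^m$ (i.e.\ $d=m$), simply take $w^{(r)} = \vec{e}_r$, giving $\Phi = 1$ directly and contradicting \vgood-ness.

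The main obstacle is therefore the case $d < m$, where no standard basis vector need lie in $W$. My plan is to apply the induction hypothesis to $W$ viewed as a subspace of the smaller ambient space $\F^m$: since $m \leq (n+1)/2 \leq n$ and $\codim_{\F^m} W = m-d \leq k$, we have $\chr\F > k+1 \geq (m-d)+1$. If $W$ itself were \vgood in $\F^m$, the induction hypothesis would force $W\subseteq\{\vec{e}_r\}^\perp$ for some $r$, contradicting our assumption. Hence $W$ is not \vgood in $\F^m$, so there exists an $m\times m$ matrix $M$ with columns in $W$ and $\per M\neq 0$. The remaining, and most technical, step is to convert the non-vanishing of $\per M$ into non-vanishing of $\Phi$: one expands $\Phi$ as a weighted sum over nonnegative integer transport plans $(N_{r,r'})$ with row and column marginals equal to $(s_r)$, isolates the ``permutation-type'' contributions (which combine into expressions involving $\per M$ and related entry-wise power permanents of $M$), and invokes $\chr\F > k+1$ to guarantee that the multinomial coefficients appearing in the expansion are nonzero in $\F$ and that a judicious choice of the $w^{(r)}$'s, built from $M$ using a greedy echelon construction respecting the linear structure of $W$, prevents cancellation. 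This combinatorial non-vanishing argument is the heart of the proof.
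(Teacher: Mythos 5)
Your framework coincides with the paper's up to and including the invocation of Claim~\ref{claim:intersect}: induction on the ambient dimension, the same base cases, and the reduction to ruling out Option~(ii). Your setup for Option~(ii) is also sound: the relation $i\sim j$ is indeed an equivalence relation with all classes of size $s_r\geq 2$, coordinates within a class are fixed nonzero multiples of the representative coordinate, the projection $\pi$ onto representatives is injective so $\dim W=n+1-k$, the factorization $\per A=\prod\mu_{r,t}\cdot\prod_r s_r!\cdot\Phi$ is correct, and $\prod_r s_r!\neq 0$ follows from $\max_r s_r\leq k+1<\chr\F$. The application of the induction hypothesis to $W\leq\F^m$ (using $m\leq(n+1)/2\leq n$ and $\codim_{\F^m}W\leq k$) correctly yields that $W$ is not \vgood.

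The gap is the final step, which you defer to a sketch but which carries essentially the entire content of the inductive step. Observe that, up to the nonzero diagonal scalings $\mu_{r,t}$ and the nonzero factor $\prod_r s_r!$, the statement ``some choice of $w^{(1)},\ldots,w^{(m)}\in W$ makes $\Phi\neq 0$'' is precisely the statement that $V$ is not \vgood, restricted to block-constant column choices; so converting ``$\per M\neq 0$ for some $m\times m$ matrix over $W$'' into ``$\Phi\neq 0$'' is not a technical conversion but is (modulo the theorem itself) equivalent to what you are trying to prove, and cannot be expected to follow from soft manipulations. Concretely, your proposed mechanism does not work as described: since every $s_r\geq 2$, every transport plan $(N_{r,r'})$ with marginals $(s_r)$ has all row sums at least $2$, so there are \emph{no} ``permutation-type'' terms in the expansion of $\Phi$, and no term of the expansion is a multiple of $\per M$ (the closest analogues, e.g.\ plans of the form $N=2P$ when all $s_r=2$, produce entrywise-power permanents such as $\per(M^{\circ 2})$, intertwined with many other plans); the ``greedy echelon construction'' that is supposed to prevent cancellation is not specified. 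The paper avoids this entirely by working with a \emph{single} equivalence class $I$ of size $\iota$: taking $\iota$ copies of one vector $\vec{u}\in V\setminus(V\cap\{\vec{e}_1\}^\perp)$ (nonzero on $I$) together with $n+1-\iota$ vectors from $V\cap\{\vec{e}_1\}^\perp$ (which vanish on all of $I$) yields a block-triangular matrix whose permanent factors as $\iota!\prod_{i\in I}u_i$ times a smaller permanent; this shows the projection of $V\cap\{\vec{e}_1\}^\perp$ is \vgood in $\F^{n+1-\iota}$, and the induction hypothesis then forces an extra index into $I$, contradicting $\crd{I}=\iota$. You would need either to adopt such a single-class reduction or to supply a genuine proof of your non-vanishing lemma; as written, the argument is incomplete at its decisive point.
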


\begin{proof}
    We proceed by induction on the ambient dimension $n$. The cases $n=1, 2$ are straightforward. Suppose the theorem holds for all dimensions up to and including some fixed $n$, and we shall prove it for $n+1$. Towards that end, let $V\leq \F^{n+1}$ be a \vgood subspace of codimension $k$ and suppose that $\chr\F>k+1$. Assume for contradiction that $V$ is not trivial; in particular, invoking Theorem \ref{thm:c1vgood} and Lemma \ref{lem:dim1vgood} we have $1<k<n$.

    Claim \ref{claim:intersect} applies with Option (ii) guaranteeing that for every index $1 \leq i \leq n+1$ we have a different index $1 \leq j \leq n+1$ (with $j \neq i)$ such that
        \[V\cap \{\vec{e}_i\}^\perp = V\cap \{\vec{e}_j\}^\perp.\]
    Let $I$ denote the set of indices which vanish together with the first:
        \[I := \stt{1 \leq j \leq n+1}{V\cap \{\vec{e}_1\}^\perp = V\cap \{\vec{e}_j\}^\perp}.\]
    Let $\iota := \crd{I}$. Without loss of generality, we may assume that $I$ consists of the first $\iota$ indices. For convenience, let us denote $V':= V\cap\{\vec{e}_1\}^\perp$. From Option (ii) we know that $\iota \geq 2$ and $\dim V' = \dim V-1 = n-k$. Moreover, since all vectors in this $(n-k)$-dimensional subspace $V'\leq \F^{n+1}$ have their first $\iota$ entries $0$, we must have $\iota \leq k+1$. We shall now derive a contradiction by showing that $\crd{I}>\iota$.

    Let $\vec{v}_1, \ldots, \vec{v}_\ell$ be a basis for $V'$ (so $\ell = n-k$). Let $\vec{u}$ complete this list to a basis for $V$. Since $\vec{u} \notin V'$, for each $i \in I$ the $i$-th coordinate of $\vec{u}$ is nonzero; for otherwise $\vec{u} \in V\cap\{\vec{e}_i\}^\perp = V'$. 

    For any $\vec{w} \in \F^{n+1}$, let $\widehat{\vec{w}} \in \F^{n+1-\iota}$ denote the projection of $\vec{w}$ on the last $n+1-\iota$ coordinates. We claim that $\spn\{\widehat{\vec{v}}_1, \ldots, \widehat{\vec{v}}_\ell\}\subseteq \F^{n+1-\iota}$ is a \vgood subspace. To see this, consider the embedding sending any vector $\widehat{\vec{w}} \in \F^{n+1-\iota}$ to the vector $\vec{w}\in \F^{n+1}$ obtained from $\widehat{\vec{w}}$ by inserting $\iota$ zeroes as the first $\iota$ coordinates (and then the entries of $\widehat{\vec{w}}$ for the remaining coordinates). If $\widehat{\vec{u}}_1, \ldots, \widehat{\vec{u}}_{n+1-\iota} \in \spn\{\widehat{\vec{v}}_1, \ldots, \widehat{\vec{v}}_\ell\}$ is any collection of vectors, then the $(n+1)\times (n+1)$ matrix
    \[\begin{mx}
        \vec{u} & \cdots & \vec{u} & \vec{u}_1 & \cdots & \vec{u}_{n+1-\iota}
    \end{mx}\]
    has columns in $V$ and therefore must have zero permanent. On the other hand, this is a block diagonal matrix; if $\vec{u} = (u_1, \ldots, u_{n+1})$ then the permanent is simply
    \[\per\begin{mx}
        \vec{u} & \cdots & \vec{u} & \vec{u}_1 & \cdots & \vec{u}_{n+1-i}
    \end{mx} = u_1\cdots u_i\cdot i!\cdot \per\begin{mx}
        \widehat{\vec{u}}_1 & \cdots & \widehat{\vec{u}}_{n+1-\iota}
    \end{mx}.\]
    Since by assumption $\chr\F > k+1 \geq \iota$, we see that $\iota! \neq 0$ so that
    \[\per\begin{mx}
        \widehat{\vec{u}}_1 & \cdots & \widehat{\vec{u}}_{n+1-\iota}
    \end{mx} = 0,\]
    proving that $\spn\{\widehat{\vec{v}}_1, \ldots, \widehat{\vec{v}}_\ell\}\subseteq \F^{n+1-\iota}$ is a \vgood subspace.

    Since $\iota\geq 1$, we may invoke the induction hypothesis to conclude that $\spn\{\widehat{\vec{v}}_1, \ldots, \widehat{\vec{v}}_\ell\}\subseteq \F^{n+1-\iota}$ must be trivial, so there exists some $1 \leq j \leq n+1-\iota$ such that $\spn\{\widehat{\vec{v}}_1, \ldots, \widehat{\vec{v}}_\ell\}\subseteq \{\vec{e}_j\}^\perp\subseteq\F^{n+1-\iota}$. It follows that
    \[V' = \spn\{\vec{v}_1, \ldots, \vec{v}_\ell\}\subseteq \{\vec{e}_{j+\iota}\}^\perp\subseteq \F^{n+1}\]
    and therefore that $j+\iota \in I$ (in addition to the first $\iota$ indices), contradicting the fact that $\crd{I} = \iota$.
\end{proof}

\section{\vgood subspaces in arbitrary characteristic}\label{sec:poly}

The goal of this section is to provide a complete characterization of \vgood subspaces, regardless of the characteristic of the field, in terms of an auxiliary polynomial which we call the \emph{permanental polynomial} (Theorem \ref{thm:zeropoly}). This allows us to check if a given codimension-$k$ subspace is \vgood with $O_k(n^k)$ field operations, and a general subspace with $2^{O(n)}$ field operations. In contrast, checking \vgood{ness} directly by fixing a basis $B = \{b_1, \ldots, b_{n-k}\}$ and checking the vanishing of all permanents of matrices with columns from $B$ takes $O((n-k)^n)$ time.

In Section \ref{sec:0fld} we have found that the only \vgood codimension-$k$ subspaces of $\F^n$ are trivial, provided that $\chr\F > k+1$ (Theorem \ref{thm:trivialpermanull}). Without the assumption on the characteristic, there are other possible \vgood subspaces, as the following example shows.

\begin{eg}[Nontrivial \vgood subspace]\label{eg:nontrivial}
    The codimension-$2$ subspace $S\leq \F_3^4$ spanned by
    \begin{align*}
        &\vec{v}_1 = \tr{\begin{mx}1 & 0 & 1 & 1\end{mx}}
        &&\vec{v}_2 = \tr{\begin{mx}0 & 1 & 1 & -1\end{mx}}
    \end{align*}
    is a \vgood subspace which is not contained in any \vgood subspace of codimension-$1$ (i.e., $S$ is non-trivial). Note that over $\F_{p^\ell}^4$ with $p>3$, $S$ is \emph{not} a \vgood subspace, as guaranteed by Theorem \ref{thm:trivialpermanull}; one way of seeing this is to compute, for example,
    \[\per \begin{mx}
        1 & 1 & 1 & 0\\
        0 & 0 & 0 & 1\\
        1 & 1 & 1 & 1\\
        1 & 1& 1 & -1
    \end{mx} = 6 \neq 0.\]
\end{eg}

In order to characterize \vgood subspaces in this general setting, we introduce an auxiliary algebraic object: the \emph{permanental polynomial.} To motivate the definition, let us start with codimension-$2$ \vgood subspaces and reprove that for fields of characteristic $p>3$, the only codimension-$2$ \vgood subspaces are the trivial ones; namely, subspaces of $\{\vec{e}_i\}^\perp$.

Suppose $S$ is a codimension-$2$ \vgood subspace. Permuting coordinates if necessary (note that the permanent is invariant under permutation of columns), we may fix a basis $\vec{v}_1, \ldots, \vec{v}_{n-2}$ for $S$ where 
\[\vec{v}_i = (\vec{e}_i^{n-2}, a_i, b_i)\]
for some scalars $a_i, b_i$.

Since $S$ is \vgood, for any $\vec{u}, \vec{w} \in S$, we have
\[\per\begin{mx}\vec{v}_1 & \cdots & \vec{v}_{n-2} & \vec{u} & \vec{w}\end{mx} = 0.\]
Expressing $\vec{u}, \vec{w}$ as linear combinations of the basis vectors:
\begin{align*}
    &\vec{u} = x_1\vec{v}_1 + \cdots + x_{n-2}\vec{v}_{n-2},\\
    &\vec{w} = y_1\vec{v}_1 + \cdots + y_{n-2}\vec{v}_{n-2},
\end{align*}
we may rewrite the matrix above as
\[\begin{mx}\vec{v}_1 & \cdots & \vec{v}_{n-2} & \vec{u} & \vec{w}\end{mx} = 
\begin{mx}
    & & & & & x_1 & y_1 \\
    & & & &  & x_2 & y_2 \\
    & & & I_{n-2} & &  x_3 & y_3 \\
    & & & & & \vdots & \vdots \\
    & & & & & x_{n-2} & y_{n-2} \\
    a_1 & a_2 & a_3 & \cdots & a_{n-2} & \tr{\vec{a}}\vec{x} & \tr{\vec{a}}\vec{y}\\
    b_1 & b_2 & b_3 & \cdots & b_{n-2} & \tr{\vec{b}}\vec{x} & \tr{\vec{b}}\vec{y}
\end{mx}\]

where $\vec{a}, \vec{b}, \vec{x}, \vec{y} \in \F^{n-2}$ are the vectors with corresponding coordinates $a_i, b_i, x_i, y_i$ for $1 \leq i \leq n-2$. In order to understand the spaces $S$ with basis $\vec{v}_1, \ldots, \vec{v}_{n-2}$, we seek conditions on $\vec{a}, \vec{b}$ which are consistent with matrices of this form having a vanishing permanent. The vectors $\vec{x}, \vec{y}$ represent the coefficients of possible linear combinations of basis vectors and are therefore allowed to vary freely. That is, we treat the permanent of the matrix above as a polynomial in $\vec{x}, \vec{y}$ with coefficients $\vec{a}, \vec{b}$. As long as this polynomial is not identically $0$, it is possible to find some values of $\vec{x}, \vec{y}$ which result in a nonzero permanent, implying that $S$ is not \vgood.

Using the definition of the permanent as a sum over permutations, we obtain 
\begin{align*}
\per\begin{mx}\tr{\vec{a}}\vec{x} & \tr{\vec{a}}\vec{y}\\ \tr{\vec{b}}\vec{x} & \tr{\vec{b}}\vec{y}\end{mx} &+ 
\sum_{1\leq i \leq n-2}x_i\per\begin{mx}a_i & \tr{\vec{a}}\vec{y} \\ b_i & \tr{\vec{b}}\vec{y}\end{mx} + y_i\per\begin{mx}a_i & \tr{\vec{a}}\vec{x} \\ b_i & \tr{\vec{b}}\vec{x}\end{mx} +\\
&+\sum_{\substack{1\leq i, j \leq n-2\\i\neq j}}x_iy_j\per\begin{mx}a_i & a_j \\ b_i & b_j\end{mx}
\end{align*}
and after collecting like terms we obtain
\begin{equation*}
\label{eq:cod2}
\tag{$\star$}
 6\sum_{1\leq i\leq n-2}x_iy_ia_ib_i+4\sum_{\substack{1\leq i, j \leq n-2\\i\neq j}}x_iy_j\per\begin{mx}a_i & a_j \\ b_i & b_j\end{mx}.
\end{equation*}
For this to be the $0$ polynomial, we must have $\Per_{(n-2)\times 2}\begin{mx}\vec{a} & \vec{b}\end{mx} = 0$ (recall that this notation means that every $2\times 2$ subpermanent vanishes). If the characteristic is $p = 3$, the first term of \eqref{eq:cod2} vanishes and this is the only condition, so the values $\vec{a}, \vec{b}$ are completely determined by the conditions in Lemma \ref{lem:i2}. 

On the other hand, if the characteristic is $p>3$, for the first term of \eqref{eq:cod2} to vanish we must also have $0 \in \{a_i, b_i\}$ for each $1 \leq i \leq n-2$. This means that conditions \ref{lemi2:c2} and \ref{lemi2:c3} of Lemma \ref{lem:i2} cannot be met, so that condition \ref{lemi2:c1} must be satisfied, i.e. $\vec{a} = 0$ or $\vec{b} = 0$. In the former case, $S \leq \{\vec{e}_3\}^\perp$ and in the latter $S\leq\{\vec{e}_4\}^\perp$. This reproves the conclusion of Theorem \ref{thm:trivialpermanull} for codimension-$2$ subspaces.

\begin{eg}[Example \ref{eg:nontrivial} revisited]
	Note that the basis vectors $\vec{v}_1, \vec{v}_2$ for $S \leq \F_3^4$,
	\begin{align*}
        &\vec{v}_1 = \tr{\begin{mx}1 & 0 & 1 & 1\end{mx}}
        &&\vec{v}_2 = \tr{\begin{mx}0 & 1 & 1 & -1\end{mx}}
    \end{align*}
    are in ``standard form'' with the first two coordinates $\vec{e}_1^2$ and $\vec{e}_2^2$ respectively. The last two coordinates satisfy the conditions of Lemma \ref{lem:i2}, as the third row of $\begin{mx}\vec{v}_1 & \vec{v}_2\end{mx}$ is of the form $(1, x)$ for $0 \neq x \in \F$ and the fourth row is of the form $(1, -x)$.
\end{eg}

One advantage of this proof over the one in Section \ref{sec:0fld} is that it also shows us what ``goes wrong'' if the characteristic is $3$. This will lead us to a full characterization of \vgood subspaces in any characteristic via the vanishing of the \emph{permanental polynomial}, a generalization of the polynomial \eqref{eq:cod2}.

For ease of computation, it will be convenient to introduce some notation; we continue to write vectors as column vectors. Fix some positive integers $d<k$. Because we shall need subscripts to distinguish between vectors, we write the components of a subscripted vector $\vec{a}_i \in \F^k$ as $\vec{a}_i = (a_{i, 1}, a_{i, 2}, \ldots, a_{i, k})^T$. As with the proof of the codimension-$2$ case, let $\vec{x}_1, \ldots, \vec{x}_d \in \F^{k}$ be some coefficient vectors and $\vec{a}_1, \ldots, \vec{a}_d\in \F^{k}$ variable vectors. The \emph{permanental polynomial} corresponding to this basis is the permanent of the $(k+d)\times(k+d)$ matrix $M$ below:

\[P := \per M \eqdef \per \begin{mx}
    I_{k} & \vec{x}_1 & \vec{x}_2 & \cdots & \vec{x}_d\\
    \vec{a}_1^T & \vec{a}_1^T\vec{x}_1 & \vec{a}_1^T\vec{x}_2 & \cdots & \vec{a}_1^T\vec{x}_d\\
    \vec{a}_2^T & \vec{a}_2^T\vec{x}_1 & \vec{a}_2^T\vec{x}_2 & \cdots & \vec{a}_2^T\vec{x}_d\\
    \vdots & \vdots & \vdots & \ddots & \vdots\\
    \vec{a}_d^T & \vec{a}_d^T\vec{x}_1 & \vec{a}_d^T\vec{x}_2 & \cdots & \vec{a}_d^T\vec{x}_d
\end{mx}\]
viewed as a polynomial in the variable vectors $\vec{x}_1, \ldots, \vec{x}_d$ (i.e.~in the $dk$ variables $x_{1,1}, \ldots, x_{d,k}$); note that it is a multilinear homogeneous polynomial of degree $d$, with each monomial containing at most one variable $x_{i, j}$ for each $1 \leq i \leq d$, and is also ``block-symmetric'' with respect to the variables $\vec{x}_1, \ldots, \vec{x}_d$. We next introduce notation for the monomials of $P$. 

We shall use Greek letters for ordered tuples. Given some tuple $\alpha \in [k]^{d}$ with components $\alpha = (\alpha_1, \ldots, \alpha_d)$, we denote by $\vec{x}^{\alpha}$ the monomial $x_{1, \alpha_1}x_{2,\alpha_2}\cdots x_{d,\alpha_d}$. 

\begin{lem}\label{lem:palpha}
    The coefficient of $\vec{x}^{\alpha}$ in $P$ is the same as the coefficient of $\vec{x}^{\alpha}$ in 
    \[P^{\alpha} \eqdef \per M^{\alpha} \eqdef \per \begin{mx}
        I_{k} & \vec{x}_1 & \vec{x}_2 & \cdots & \vec{x}_d\\
        \vec{a}_1^T & a_{1,\alpha_1}x_{1,\alpha_1} & a_{1,\alpha_2}x_{2,\alpha_2} & \cdots & a_{1,\alpha_d}x_{d,\alpha_d}\\
        \vec{a}_2^T & a_{2,\alpha_1}x_{1,\alpha_1} & a_{2,\alpha_2}x_{2,\alpha_2} & \cdots & a_{2,\alpha_d}x_{d,\alpha_d}\\
        \vdots & \vdots & \vdots & \ddots & \vdots\\
        \vec{a}_d^T & a_{d,\alpha_1}x_{1,\alpha_1} & a_{d,\alpha_2}x_{2,\alpha_2} & \cdots & a_{d,\alpha_d}x_{d,\alpha_d}
    \end{mx}\]
\end{lem}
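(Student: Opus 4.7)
The plan is to extract the coefficient of $\vec{x}^\alpha$ from both $\per M$ and $\per M^\alpha$ via a common substitution, then observe that the two matrices collapse to the same matrix under that substitution.

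I would first note that each of the $d$ columns $k+i$ of $M$ is linear in $\vec{x}_i$: the top $k$ entries are precisely the coordinates of $\vec{x}_i$, and each bottom entry $\vec{a}_\ell^T \vec{x}_i$ is a linear form in $\vec{x}_i$. The remaining $k$ columns contain no $\vec{x}$-variables. Hence $\per M$ is multilinear in $\vec{x}_1, \ldots, \vec{x}_d$, and every one of its monomials has the form $\vec{x}^\beta = x_{1,\beta_1}\cdots x_{d,\beta_d}$ for some $\beta \in [k]^d$, with the factor $x_{i,\beta_i}$ coming from column $k+i$. The analogous statement holds for $\per M^\alpha$.

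Next, I would apply the substitution $\vec{x}_i \mapsto t_i \vec{e}_{\alpha_i}^k$ for fresh scalars $t_1, \ldots, t_d$. By the multilinearity observation, the monomial $\vec{x}^\beta$ becomes $t_1 \cdots t_d$ when $\beta = \alpha$ and vanishes otherwise, so $\per M$ under this substitution equals $c_\alpha \cdot t_1 \cdots t_d$, where $c_\alpha$ denotes the coefficient of $\vec{x}^\alpha$ in $\per M$; and similarly $\per M^\alpha$ becomes $c'_\alpha \cdot t_1 \cdots t_d$. It therefore suffices to verify that $M$ and $M^\alpha$ coincide after the substitution.

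The verification is direct and constitutes the only real content of the argument. The top-left identity block $I_k$ and the bottom-left block containing the $\vec{a}_\ell^T$ involve no $\vec{x}$-variables and are unaffected. In the top $k$ rows of column $k+i$, both matrices have entries $x_{i,r}$ which become $t_i$ for $r=\alpha_i$ and $0$ otherwise. In the bottom-right block, the entry at position $(k+\ell, k+i)$ of $M$ is $\vec{a}_\ell^T \vec{x}_i = \sum_s a_{\ell,s} x_{i,s}$, which collapses to $a_{\ell, \alpha_i} t_i$; in $M^\alpha$, the corresponding entry $a_{\ell, \alpha_i} x_{i, \alpha_i}$ collapses to the same $a_{\ell, \alpha_i} t_i$. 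The substituted matrices coincide, their permanents agree, and therefore $c_\alpha = c'_\alpha$. The argument is routine multilinear bookkeeping; the only point requiring care is spotting the substitution $\vec{x}_i \mapsto t_i \vec{e}_{\alpha_i}^k$ that isolates the target monomial while simultaneously identifying the two matrices.
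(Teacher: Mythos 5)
Your proof is correct, and it takes a genuinely different (if closely related) route from the paper's. The paper argues diagonal-by-diagonal: it fixes a permutation $\sigma \in S_{k+d}$, expands each bottom-right entry $\vec{a}_i^T\vec{x}_{\sigma(i)} = \sum_j a_{i,j}x_{\sigma(i),j}$ inside the corresponding generalized-diagonal product, and observes that the only summand capable of contributing to the monomial $\vec{x}^{\alpha}$ is the one with $j=\alpha_{\sigma(i)}$ --- which is precisely the entry retained in $M^{\alpha}$. You package the same underlying observation (the monomial $\vec{x}^{\alpha}$ pins the coordinate index within block $i$ to $\alpha_i$) as an evaluation argument: the substitution $\vec{x}_i \mapsto t_i\vec{e}_{\alpha_i}$ is a ring homomorphism into $\F[t_1,\ldots,t_d]$ that sends $\vec{x}^{\beta}$ to $t_1\cdots t_d$ exactly when $\beta=\alpha$ and to $0$ otherwise, and it identifies $M$ with $M^{\alpha}$ entrywise, so the two permanents and hence the two coefficients agree. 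Your route avoids the permutation bookkeeping entirely and is arguably cleaner; the one extra obligation it incurs --- checking that both $\per M$ and $\per M^{\alpha}$ are multilinear with every monomial of the form $\vec{x}^{\beta}$, so that the substitution really isolates the $\alpha$-coefficient --- you discharge explicitly via the observation that column $k+i$ is homogeneous linear in $\vec{x}_i$ and the other columns are constant. Both arguments are purely formal and valid over any field.
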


\begin{proof}
    This is clear by examining the generalized diagonals: fix some $\sigma \in S_{k+d}$; the only difference between $M$ and $M^{\alpha}$ is the bottom right $d\times d$ submatrix, so suffice to examine those terms. Assume therefore that $M_{i,\sigma_i} = \vec{a}_{i}^T\vec{x}_{\sigma(i)} = \sum_{j=1}^{d}a_{i, j}x_{\sigma(i), j}$, after expanding the sums the only summand which contributes to $\vec{x}^{\alpha}$ is when $j = \alpha_{\sigma(i)}$, i.e. $a_{i, \alpha_{\sigma(i)}}x_{\sigma(i), \alpha_{\sigma(i)}} = M^{\alpha}_{i,\sigma_i}$.
\end{proof}

To determine $P$, we need to compute the coefficient of $\vec{x}^{\alpha}$ for every $\alpha \in [k]^d$. This coefficient turns out to be a multiple of 
\[A^{\alpha}\eqdef \per\begin{mx}
\vec{a}_1^{\alpha} & \vec{a}_2^{\alpha} & \cdots & \vec{a}_d^{\alpha}
\end{mx}.\]
The exact multiple depends on the choice of the tuple $\alpha \in [k]^d$ in the following manner: viewing $\alpha$ as a function $\alpha: [d] \to [k]$ assigning $i \in d$ to $\alpha_i \in [k]$, we obtain a partition of $[d]$ with blocks $\alpha^{-1}(j)$ (for $j \in \mathrm{img}(\alpha)$). Let $n_\alpha$ denote the number of ways of choosing at most one element from each block; if we denote the size of each block by $n_{j}\eqdef \crd{\alpha^{-1}(j)}$, then
\[n_{\alpha} = \prod_{j \in \mathrm{img}(\alpha)}(n_j+1).\]

For proving this claim, we introduce some helpful concepts. By a \emph{partial configuration from} $n_{\alpha}$ we mean a selection of at most one representative from each block $\alpha^{-1}(j)$ (for each $j \in \mathrm{img}(\alpha)$). (Thus there are precisely $n_{\alpha}$ many partial configurations.) The \emph{weight} of the partial configuration is the total number of representatives chosen. We say that a permutation $\tau \in S_{k+d}$ \emph{corresponds} to a partial configuration exactly when: the element $i \in \alpha^{-1}(j)$ was selected as part of the partial configuration if and only if $\tau(\alpha_i) = k+i$. Note that every permutation corresponds to some partial configuration (possibly the empty one).

\begin{thm}\label{thm:ppol}
    The coefficient of $\vec{x}^{\alpha}$ in $P$ is $n_\alpha A^{\alpha}$.
\end{thm}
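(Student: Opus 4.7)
My plan is to expand the permanent $P^{\alpha} = \per M^{\alpha}$ as a sum over permutations $\sigma \in S_{k+d}$ and group terms by their combinatorial structure; by Lemma~\ref{lem:palpha}, this yields the coefficient of $\vec{x}^{\alpha}$ in $P$ as well.

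First I would classify which permutations contribute to the monomial $\vec{x}^{\alpha} = x_{1,\alpha_1}\cdots x_{d,\alpha_d}$. For each of the $d$ rightmost columns $k+j$ of $M^\alpha$, $\sigma$ must select a variable-bearing entry, and the only entries of column $k+j$ that yield $x_{j,\alpha_j}$ are (a) the entry $x_{j,\alpha_j}$ in top row $\alpha_j$, contributing factor $1$, or (b) the entries $a_{i,\alpha_j}x_{j,\alpha_j}$ in bottom row $k+i$ for some $i\in[d]$, contributing factor $a_{i,\alpha_j}$. Let $T\subseteq[d]$ record the set of $j$'s where option (a) is used. Since $\sigma$ is a bijection, the rows $\{\alpha_j:j\in T\}$ must be distinct, i.e., $\alpha|_T$ must be injective; this is precisely the condition that $T$ corresponds to a partial configuration.

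Next, I would fix a partial configuration $T$ and sum the contributions over permutations realizing it. The bottom choices are encoded by an injection $\phi:[d]\setminus T\to[d]$, $\phi(j)=i_j$, contributing $\prod_{j\notin T}a_{\phi(j),\alpha_j}$. The leftover top rows $[k]\setminus\{\alpha_j:j\in T\}$ must be sent into columns of $[k]$ through the identity block, which forces each such row $r$ to map to column $r$ and contributes a factor of $1$ overall. The remaining bottom rows $\{k+i:i\in[d]\setminus\phi([d]\setminus T)\}$ must then biject to the remaining columns $\{\alpha_j:j\in T\}$, yielding $\per(a_{i,\alpha_j})_{i\notin\phi([d]\setminus T),\,j\in T}$. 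Summing over $\phi$ by first choosing $V:=\phi([d]\setminus T)$ and then bijecting $[d]\setminus T\to V$, the $T$-contribution becomes
\[\sum_{\substack{V\subseteq[d]\\|V|=d-|T|}}\per(a_{i,\alpha_j})_{i\in V,\,j\notin T}\cdot\per(a_{i,\alpha_j})_{i\in[d]\setminus V,\,j\in T}.\]

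Finally, I would invoke the standard block-decomposition identity for the permanent: for a $d\times d$ matrix $B$ and any $T\subseteq[d]$, grouping $\sigma\in S_d$ by $U:=\sigma^{-1}(T)$ gives $\per B=\sum_{|U|=|T|}\per(B_{U,T})\per(B_{U^c,T^c})$. Applied to $B_{i,j}:=a_{i,\alpha_j}$, the displayed sum above equals $\per B = A^{\alpha}$. Hence each of the $n_\alpha$ partial configurations contributes exactly $A^{\alpha}$, and the coefficient of $\vec{x}^{\alpha}$ is $n_\alpha A^{\alpha}$. The only delicate bookkeeping step is confirming that the forced identity matching on the leftover top rows cleanly separates the bottom-right contribution from the residual permanent; the conceptual heart is recognizing that the sum over bottom assignments reassembles $\per B$ through its block decomposition, so the combinatorial factor $n_\alpha$ comes solely from the choice of partial configuration.
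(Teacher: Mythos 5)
Your proof is correct and follows essentially the same route as the paper's: you group the contributing permutations by the set $T$ of right columns served from the top block (the paper's ``partial configuration''), force the identity on the leftover rows of $I_k$, and show each group contributes exactly $\vec{x}^{\alpha}A^{\alpha}$. The only difference is cosmetic: the paper reads off the residual freedom directly as the permanent of a single $d\times d$ matrix whose columns for $j\notin T$ still carry the $x$-factors, whereas you split that residue into the injection $\phi$ plus a final bijection and reassemble $A^{\alpha}$ via the Laplace expansion for permanents---equivalent, just slightly longer bookkeeping.
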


\begin{proof}
    Fix some $\alpha \in [k]^d$. By Lemma \ref{lem:palpha}, suffice it to determine the coefficient of $\vec{x}^{\alpha}$ in $P^{\alpha}$. By the definition of the permanent, $P^{\alpha} = \sum_{\sigma \in S_{k+d}}\prod_{i=1}^{k+d}M^{\alpha}_{i,\sigma(i)}$. For the variable $x_{i, \alpha_i}$ to appear in the term $\prod_{i=1}^{k+d}M^{\alpha}_{i,\sigma(i)}$ exactly one of the following two possibilities must be satisfied:
    \begin{itemize}
        \item either $k+i = \sigma(\alpha_i)$; or
        \item there exists some $k+1\leq j\leq k+d$ such that $k+i = \sigma(j)$.
    \end{itemize}
    Call a permutation \emph{valid} if it satisfies exactly one of these conditions for each $1 \leq i \leq d$, and let $S_{k+d}^{\alpha}$ be the set of valid permutations.

    Fix some partial configuration of $n_\alpha$. We shall prove that the sum over all valid permutations $\sigma \in S_{k+d}^{\alpha}$ corresponding to this partial configuration is precisely 
    \[\vec{x}^{\alpha}A^{\alpha} = \vec{x}^{\alpha}\per\begin{mx}
    \vec{a}_1^{\alpha} & \vec{a}_2^{\alpha} & \cdots & \vec{a}_d^{\alpha}
    \end{mx}.\]
    Since every permutation corresponds to some partial configuration and there are $n_{\alpha}$ partial configurations, this immediately implies that the coefficient of $\vec{x}^{\alpha}$ in $P$ is precisely $n_{\alpha}A^{\alpha}$, as we wanted to show.

    To prove the claim, suppose the partial configuration is of weight $\ell$ and assume without loss of generality that the partial configuration is such that $1, \ldots, \ell$ are the chosen representatives (the permanent is invariant under permutation of columns) so any corresponding permutation must have $\sigma(\alpha_1) = k+1$, $\sigma(\alpha_2) = k+2$, $\ldots$, $\sigma(\alpha_\ell) = k+\ell$. This gives us $x_{1, \alpha_{1}}\cdots x_{\ell, \alpha_\ell}$, and we must determine the remainder of $\sigma$.

    We claim that for $1 \leq r \leq k$ such that $r \notin \{\alpha_1, \ldots, \alpha_\ell\}$, we must have $\sigma(r) = r$. To prove this, suffice it to show that $1\leq \sigma(r)\leq k$, for then $M^{\alpha}_{r, \sigma(r)}$ is an entry of $I_k$ and must be $1$ (i.e., $\sigma(r) = r$) lest the whole term be zero. However, if $\sigma(r)\geq k+1$, say $k+t = \sigma(r)$, then (since $\sigma$ is a valid permutation and $1 \leq r \leq k$) we must have $r = \alpha_t$ and this is impossible since $r \notin \{\alpha_1, \ldots, \alpha_\ell\}$ and $\sigma$ corresponds to the partial configuration.

    We have now accounted for the first $k$ rows of $M^{\alpha}$ and also for some $k$ columns: the columns $k+1, \ldots, k+\ell$ and the columns $\{1, \ldots, k\}\setminus\{\alpha_1, \ldots, \alpha_\ell\}$. The remainder of $\sigma$ can be chosen freely (under the constraint that the result is a permutation); thus $\prod_{j=k+1}^{k+d}M^{\alpha}_{j, \sigma(j)}$ is a generalized diagonal of the matrix

    \[B = \begin{mx}
        a_{1, \alpha_1} & \cdots & a_{1, \alpha_\ell} & a_{1, \alpha_{\ell+1}}x_{\ell+1, \alpha_{\ell+1}} & \cdots & a_{1, \alpha_{d}}x_{d, \alpha_d}\\
        a_{2, \alpha_1} & \cdots & a_{2, \alpha_\ell} & a_{2, \alpha_{\ell+1}}x_{\ell+1, \alpha_{\ell+1}} & \cdots & a_{2, \alpha_d}x_{d, \alpha_d}\\
        \vdots & \ddots & \vdots & \vdots & \ddots & \vdots \\
        a_{d, \alpha_1} & \cdots & a_{d, \alpha_\ell} & a_{d, \alpha_{\ell+1}}x_{\ell+1, \alpha_{\ell+1}} & \cdots & a_{d, \alpha_d}x_{d, \alpha_d}
    \end{mx}\]
    and summing over the possibilities for $\sigma$ we get $x_{\ell+1, \alpha_{\ell+1}}\cdots x_{d, \alpha_d}A^{\alpha}$.
\end{proof}

\begin{crl}
    For any injective $\alpha: [d] \to [k]$, the coefficient of $\vec{x}^\alpha$ in 
    \[P = \begin{mx}
    I_{k} & \vec{x}_1 & \vec{x}_2 & \cdots & \vec{x}_d\\
    \vec{a}_1^T & \vec{a}_1^T\vec{x}_1 & \vec{a}_1^T\vec{x}_2 & \cdots & \vec{a}_1^T\vec{x}_d\\
    \vec{a}_2^T & \vec{a}_2^T\vec{x}_1 & \vec{a}_2^T\vec{x}_2 & \cdots & \vec{a}_2^T\vec{x}_d\\
    \vdots & \vdots & \vdots & \ddots & \vdots\\
    \vec{a}_d^T & \vec{a}_d^T\vec{x}_1 & \vec{a}_d^T\vec{x}_2 & \cdots & \vec{a}_d^T\vec{x}_d
\end{mx}\]
    is $2^d A^{\alpha}$.
\end{crl}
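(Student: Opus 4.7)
The plan is to apply Theorem \ref{thm:ppol} directly. That theorem tells us the coefficient of $\vec{x}^{\alpha}$ in $P$ is $n_\alpha A^\alpha$, where $n_\alpha = \prod_{j \in \mathrm{img}(\alpha)}(n_j+1)$ and $n_j = \crd{\alpha^{-1}(j)}$. So the whole task reduces to computing $n_\alpha$ under the extra hypothesis that $\alpha$ is injective.

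When $\alpha : [d] \to [k]$ is injective, each fibre $\alpha^{-1}(j)$ for $j \in \mathrm{img}(\alpha)$ is a singleton, so $n_j = 1$ for each such $j$, and $\crd{\mathrm{img}(\alpha)} = d$. Substituting into the formula gives
\[ n_\alpha \;=\; \prod_{j \in \mathrm{img}(\alpha)} (n_j+1) \;=\; \prod_{j \in \mathrm{img}(\alpha)} 2 \;=\; 2^d. \]
Combining with Theorem \ref{thm:ppol} yields that the coefficient of $\vec{x}^\alpha$ in $P$ equals $2^d A^\alpha$, as required.

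There is no real obstacle here; the corollary is simply a specialization of the main theorem to the case where the partition of $[d]$ induced by $\alpha$ consists entirely of singletons, in which case the counting of partial configurations collapses to choosing, independently for each of the $d$ blocks, whether or not to include its unique element — giving $2^d$ configurations.
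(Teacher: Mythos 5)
Your proof is correct and is exactly the intended argument: the corollary is an immediate specialization of Theorem \ref{thm:ppol}, since injectivity of $\alpha$ forces every fibre to be a singleton and hence $n_\alpha = 2^d$. The paper states this corollary without proof for precisely this reason, so there is nothing to add.
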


\begin{eg}
    To illustrate Theorem \ref{thm:ppol}, let us compute the permanental polynomial 
    \[P_3 \eqdef \per \begin{mx}
    & & & & & x_1 & y_1 & z_1 \\
    & & & &  & x_2 & y_2 & z_2\\
    & & & I_{n-3} & &  x_3 & y_3 & z_3 \\
    & & & & & \vdots & \vdots & \vdots \\
    & & & & & x_{n-3} & y_{n-3} & z_{n-3} \\
    a_1 & a_2 & a_3 & \cdots & a_{n-3} & \tr{\vec{a}}\vec{x} & \tr{\vec{a}}\vec{y} & \tr{\vec{a}}\vec{z}\\
    b_1 & b_2 & b_3 & \cdots & b_{n-3} & \tr{\vec{b}}\vec{x} & \tr{\vec{b}}\vec{y} & \tr{\vec{b}}\vec{z}\\
    c_1 & c_2 & c_3 & \cdots & c_{n-3} & \tr{\vec{c}}\vec{x} & \tr{\vec{c}}\vec{y} & \tr{\vec{c}}\vec{z}
\end{mx}\]
    The monomials are of the form $x_{r}y_{s}z_{t}$ and there are three types of terms according to the number of distinct indices $r, s, t$:
    \begin{align*}
        P_3 =\, &4\sum_{1\leq r\leq n-3}x_ry_rz_r\per\begin{mx}
            a_r & a_r & a_r\\
            b_r & b_r & b_r\\
            c_r & c_r & c_r
        \end{mx}\\
        &+ 6\sum_{\substack{1\leq r, s\leq n-3\\r\neq s}}(x_ry_rz_s+x_ry_sz_r+x_sy_rz_r)\per\begin{mx}
            a_r & a_r & a_s \\
            b_r & b_r & b_s \\
            c_r & c_r & c_s
        \end{mx}\\
        &+ 8\sum_{\substack{1\leq r, s, t\leq n-3\\r, s, t \mbox{ \tiny distinct}}}x_ry_sz_t\per\begin{mx}
            a_r & a_s & a_t\\
            b_r & b_s & b_t\\
            c_r & c_s & c_t
        \end{mx}.
    \end{align*}

    Suppose now that we are working over a field of characteristic $p>3$ (and therefore of characteristic $p>3+1$), so that the coefficient of each monomial above is $0$ if and only if the permanent of the corresponding $3\times 3$ matrix is $0$. One can show, via case analysis, that one of $\vec{a}, \vec{b}, \vec{c}$ must be $0$. This would reprove the conclusion of Theorem \ref{eg:nontrivial} for codimension-$3$ subspaces; i.e.~that if $S\leq \F_q^n$ is a \vgood space of codimension $3$, then it is trivial. We shall not go through the case analysis, but instead derive this from a more general theorem below.
\end{eg}

\begin{thm}\label{thm:zeropoly}
	Let $S \leq \F^n$ be a subspace of codimension $d$. Permuting the coordinates of $\F^n$ if necessary, fix a \emph{standard basis} for $S$ of the form 
	\[(\vec{e}_1, \vec{a}_1), \ldots, (\vec{e}_{n-d}, \vec{a}_{n-d})\]
	(where $\vec{e}_i \in \F^{n-d}$ and $\vec{a}_i\in\F^{d}$). Let $P$ be the corresponding $n\times n$ permanental polynomial
	\[P \eqdef \per \begin{mx}
    &  & & x_{1,1} & \cdots & x_{d,1} \\
    &  I_{n-d} & & \vdots & \ddots & \vdots \\
    &  & & x_{1,n-d} & \cdots & x_{d,n-d} \\
    a_{1,1} & \cdots & a_{1,n-d} & \tr{\vec{a}_1}\vec{x}_1 & \cdots & \tr{\vec{a}_1}\vec{x}_d\\
    \vdots & \ddots & \vdots & \vdots & \ddots & \vdots\\
    a_{d,1} & \cdots & a_{d,n-d} & \tr{\vec{a}_d}\vec{x}_1 & \cdots & \tr{\vec{a}_d}\vec{x}_d
\end{mx}.\]
	Then $S$ is \vgood if and only if $P$ is the zero polynomial.
\end{thm}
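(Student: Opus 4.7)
The approach hinges on a single observation: the matrix $M$ defining $P$ is precisely the matrix whose first $n-d$ columns are the chosen standard basis vectors $(\vec{e}_i, \vec{a}_i)$ of $S$, and whose final $d$ columns are $\sum_{i=1}^{n-d} x_{c,i}(\vec{e}_i, \vec{a}_i)$ for $c = 1, \ldots, d$, i.e.\ arbitrary elements of $S$ parametrized by $\vec{x}_1, \ldots, \vec{x}_d \in \F^{n-d}$. Indeed, under the substitution $\vec{x}_c = \vec{e}_{j_c}$, the $(n-d+c)$-th column becomes $(\vec{e}_{j_c}, \vec{a}_{j_c})$ since $\vec{a}_r^T\vec{e}_{j_c} = a_{r,j_c}$. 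Hence, for every choice of parameters, $P$ evaluates to the permanent of an $n\times n$ matrix whose columns all lie in $S$. The theorem follows by turning this observation into each of the two directions.

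For the ``only if'' direction, assume $S$ is \vgood. By the observation above, $P$ vanishes as a function on $(\F^{n-d})^d$. I next note that $P$ is multilinear in the scalar variables $x_{c, i}$: every entry of $M$ is either a constant or a linear form in exactly one of the blocks $\vec{x}_c$, and each block occupies exactly one column (namely column $n-d+c$). Thus each term $\prod_i M_{i,\sigma(i)}$ in the permanent sum is a product of $d$ linear forms drawn from distinct blocks, so $P$ has degree at most $1$ in every $x_{c,i}$. Since a multilinear polynomial in $N$ scalar variables that vanishes identically on $\F_q^N$ (with $q\geq 2$) must be the zero polynomial---its coefficients are determined by its values on $\{0,1\}^N\subseteq \F_q^N$ via Möbius inversion on the subset lattice---we conclude $P\equiv 0$.

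For the ``if'' direction, assume $P\equiv 0$ and let $\vec{u}_1, \ldots, \vec{u}_n\in S$ be arbitrary, with $\vec{u}_k = \sum_{i=1}^{n-d}\mu_{k,i}(\vec{e}_i, \vec{a}_i)$. By multilinearity of the permanent in its columns,
\[
\per\begin{mx}\vec{u}_1 & \cdots & \vec{u}_n\end{mx}
= \sum_{(i_1, \ldots, i_n)\in [n-d]^n}\mu_{1, i_1}\cdots \mu_{n, i_n}\per\begin{mx}(\vec{e}_{i_1}, \vec{a}_{i_1}) & \cdots & (\vec{e}_{i_n}, \vec{a}_{i_n})\end{mx}.
\]
If some $j\in [n-d]$ is absent from $(i_1, \ldots, i_n)$, then the $j$-th row of the associated matrix is identically zero (its top $n-d$ entries are $\delta_{i_k, j}$), so that term vanishes. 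Otherwise every element of $[n-d]$ occurs at least once; since $\per$ is invariant under column permutation, we may rearrange the columns so that $(\vec{e}_1, \vec{a}_1), \ldots, (\vec{e}_{n-d}, \vec{a}_{n-d})$ fill the first $n-d$ columns in order, leaving the final $d$ columns to be repetitions $(\vec{e}_{j_c}, \vec{a}_{j_c})$ for some $j_1, \ldots, j_d\in [n-d]$. This is precisely $P$ evaluated at $\vec{x}_c = \vec{e}_{j_c}$, which vanishes by hypothesis. Every term in the expansion is zero, so $S$ is \vgood.

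No step is genuinely hard: the content is bookkeeping around the twin observations that \vgoodness is a linear condition extended from a basis via column-multilinearity of the permanent, and that over $\F_q$ with $q\geq 2$ a multilinear polynomial which vanishes as a function is the zero polynomial. The only point demanding a little care is disambiguating the notational overloading of $\vec{a}_r$ (a $d$-vector when indexed by a basis index, a $(n-d)$-row when appearing in $\vec{a}_r^T\vec{x}_c$) so that the interpretation of $M$ as a matrix with columns in $S$ is unambiguous.
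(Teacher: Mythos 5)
Your proof is correct and follows essentially the same route as the paper's: the ``if'' direction reduces an arbitrary permanent of columns from $S$ via column-multilinearity to evaluations of $P$ at standard basis vectors (discarding terms with a missing basis index because of a zero row), and the ``only if'' direction observes that $P$ is a multilinear polynomial vanishing as a function on $\F_q^{d(n-d)}$ and hence identically zero. Your explicit justification of that last step via interpolation on $\{0,1\}^N$ is a welcome bit of extra detail that the paper leaves implicit.
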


\begin{proof}
	Let us start with the forward direction and suppose $P$ is identically zero. Given any $n$ vectors $\vec{u}_1, \ldots, \vec{u_n} \in S$ let us express them as a linear combination of the basis elements $\vec{v}_1, \ldots, \vec{v}_{n-d}$
	\[\vec{u}_i = \sum_{j=1}^{n-d}\lambda_{i,j}\vec{v}_j.\]
	We wish to prove $\per\begin{mx}\vec{u}_1 &\cdots &\vec{u}_n\end{mx} = 0$. By the multilinearity of the permanent, suffice it to show that
	\begin{align*}
		\per\begin{mx}\vec{v}_{j_1} & \cdots & \vec{v}_{j_n}\end{mx} = 0
	\end{align*}
	for every choice of $j_1, \ldots, j_n \in \{1, \ldots, n-d\}$ (where distinct $j_k$ may correspond to the same index).

	Note that if one of the indices $1, \ldots, n-d$ is not selected, then the resulting matrix has a row of $0$ and so vanishing permanent. Suppose therefore that each of $1, \ldots, n-d$ is selected. Therefore, after permuting the columns, we obtain a permanent of the form 
	\[P \eqdef \per \begin{mx}
    &  & & x_{1,1} & \cdots & x_{d,1} \\
    &  I_{n-d} & & \vdots & \ddots & \vdots \\
    &  & & x_{1,n-d} & \cdots & x_{d,n-d} \\
    a_{1,1} & \cdots & a_{1,n-d} & \tr{\vec{a}_1}\vec{x}_1 & \cdots & \tr{\vec{a}_1}\vec{x}_d\\
    \vdots & \ddots & \vdots & \vdots & \ddots & \vdots\\
    a_{d,1} & \cdots & a_{d,n-d} & \tr{\vec{a}_d}\vec{x}_1 & \cdots & \tr{\vec{a}_d}\vec{x}_d
	\end{mx}\]
	with $\vec{x}_1, \ldots, \vec{x}_d \in \{\vec{e}_1, \ldots, \vec{e}_{n-d}\}$. Since $P$ is the zero polynomial, we are done.

	For the backward direction, suppose $S$ is \vgood subspace. Then the polynomial $P$ vanishes for each choice of the $d(n-d)$ variables $x_{1, 1}, \ldots, x_{d, n-d}$. But $P$ is multilinear and homogeneous of degree $d$ and so must be the zero polynomial.
\end{proof}

\begin{crl}\label{crl:computation}
    Let $S \leq \F^n$ be a subspace of codimension $d$. It is possible to determine whether $S$ is \vgood using at most $O_d(n^d)$ operations.
\end{crl}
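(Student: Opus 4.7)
The plan is to combine Theorem \ref{thm:zeropoly} and Theorem \ref{thm:ppol} to reduce the test to enumerating the coefficients of the permanental polynomial $P$ and checking that each one vanishes. Since $P$ is multilinear and homogeneous of degree $d$ in the $d(n-d)$ variables $x_{i,j}$ (with exactly one variable per block $\{x_{i,1},\ldots,x_{i,n-d}\}$ in each monomial), its monomials are indexed by tuples $\alpha \in [n-d]^d$, giving exactly $(n-d)^d = O_d(n^d)$ of them. This count will be the dominant cost.

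First I would put $S$ into the standard form required by Theorem \ref{thm:zeropoly}: starting from any generating set for $S$, perform Gaussian elimination with column pivoting to produce a basis of the form $(\vec{e}_i, \vec{a}_i)_{i=1}^{n-d}$, permuting the coordinates of $\F^n$ if necessary. This preprocessing uses $O(n^{O(1)})$ field operations and, for $d \geq 3$, is absorbed into $O_d(n^d)$; for the small cases $d \in \{1,2\}$ the task is trivial and can be handled directly. After this step, the vectors $\vec{a}_1, \ldots, \vec{a}_{n-d} \in \F^{d}$ are available in $O_d(n)$ storage.

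Next, by Theorem \ref{thm:zeropoly}, $S$ is \vgood if and only if $P$ is the zero polynomial, which in turn happens iff every coefficient vanishes. By Theorem \ref{thm:ppol}, the coefficient of $\vec{x}^\alpha$ in $P$ equals $n_\alpha A^\alpha$, where $n_\alpha = \prod_{j \in \mathrm{img}(\alpha)}(|\alpha^{-1}(j)|+1)$ and $A^\alpha$ is the permanent of the $d \times d$ matrix whose $(i,k)$-entry is $a_{i,\alpha_k}$. The algorithm is therefore to iterate over all $\alpha \in [n-d]^d$, compute $n_\alpha$ in $O(d)$ operations by tallying fiber sizes, compute $A^\alpha$ in $O(d \cdot d!)$ operations (or $O(d \cdot 2^d)$ via Ryser's formula) by reading off a $d \times d$ submatrix of the stored $\vec{a}_i$'s and computing its permanent, and verify that $n_\alpha A^\alpha = 0$. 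Output \vgood iff all $(n-d)^d$ checks succeed.

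The per-coefficient cost is $O_d(1)$ by the above, so the total is $O_d(n^d)$ operations, matching the claim. There is no real obstacle here; the only subtle point is that the bookkeeping of fiber sizes for $n_\alpha$ must be done in $O(d)$ rather than $O(n)$, which is easy since $\alpha$ has only $d$ entries.
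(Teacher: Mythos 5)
Your proposal is correct and follows essentially the same route as the paper: reduce to the standard basis, invoke Theorem \ref{thm:zeropoly} to equate \vgood{}ness with the vanishing of the permanental polynomial, and use Theorem \ref{thm:ppol} to check all $O_d(n^d)$ coefficients $n_\alpha A^\alpha$ via $d\times d$ permanents. Your version is merely more explicit about the preprocessing and the per-coefficient cost (and correctly notes that one must test $n_\alpha A^\alpha = 0$ rather than $A^\alpha = 0$ alone, since $n_\alpha$ may vanish in positive characteristic), but the argument is the same.
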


\begin{proof}
    Permuting the coordinates of $\F^n$ if necessary, fix a \emph{standard basis} for $S$ of the form 
    \[(\vec{e}_1, \vec{a}_1), \ldots, (\vec{e}_{n-d}, \vec{a}_{n-d})\]
    (where $\vec{e}_i \in \F^{n-d}$ and $\vec{a}_i\in\F^{d}$), and let $P$ be the corresponding $n\times n$ permanental polynomial. By Theorem \ref{thm:zeropoly}, $S$ is \vgood if and only if $P$ is the zero polynomial. By Theorem \ref{thm:ppol} we can verify this by computing $n^d$ permanents of size $d\times d$.
\end{proof}

\begin{crl}\label{crl:sufficient}
	Let $S \leq \F^n$ be a subspace of codimension $d$. Permuting the coordinates of $\F^n$ if necessary, fix a \emph{standard basis} for $S$ of the form 
	\[\vec{v}_1 = (\vec{e}_1, \vec{a}_1), \ldots, \vec{v}_{n-d} = (\vec{e}_{n-d}, \vec{a}_{n-d})\]
	(where $\vec{e}_i \in \F^{n-d}$ and $\vec{a}_i\in\F^{d}$). Let $\widehat{S}:= \spn\{\vec{a}_1, \ldots, \vec{a}_{n-d}\} \leq \F^{d}$. If $\widehat{S}$ is \vgood then $S$ is \vgood.
\end{crl}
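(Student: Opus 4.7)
The plan is to leverage Theorem~\ref{thm:zeropoly}, which reduces the question of \vgood{}ness of $S$ to the vanishing of its associated permanental polynomial $P$, combined with the explicit formula for the monomial coefficients of $P$ from Theorem~\ref{thm:ppol}. First I would observe that the permanental polynomial $P$ attached to the given standard basis of $S$ is precisely the $P$ built in Theorem~\ref{thm:ppol} with $k=n-d$, where the $d$ vectors fed into that construction are the rows of the $d\times(n-d)$ matrix $[a_{j,i}]$ whose $i$-th column is the original basis vector $\vec{a}_i \in \F^d$.

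Next I would read off the coefficient of each monomial $\vec{x}^\alpha$ (for $\alpha\in[n-d]^d$) from Theorem~\ref{thm:ppol}: this coefficient equals $n_\alpha\cdot A^\alpha$, where $A^\alpha$ is the permanent of a $d\times d$ matrix whose $(i,j)$-entry is $a_{j,\alpha_i}=(\vec{a}_{\alpha_i})_j$. Since permanents are invariant under transposition, $A^\alpha$ coincides with
\[
  \per\begin{mx}\vec{a}_{\alpha_1}&\vec{a}_{\alpha_2}&\cdots&\vec{a}_{\alpha_d}\end{mx},
\]
a $d\times d$ permanent whose columns are drawn from the generating set $\{\vec{a}_1,\ldots,\vec{a}_{n-d}\}$ of $\widehat{S}$.

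Finally, because $\widehat{S}\leq\F^d$ is assumed to be \vgood, every $d\times d$ matrix with columns in $\widehat{S}$ has zero permanent; in particular $A^\alpha=0$ for every tuple $\alpha\in[n-d]^d$. Hence every monomial coefficient of $P$ vanishes, so $P$ is identically the zero polynomial, and the reverse direction of Theorem~\ref{thm:zeropoly} then yields that $S$ is \vgood.

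The proof is short and essentially bookkeeping once Theorems~\ref{thm:zeropoly} and~\ref{thm:ppol} are in hand; the only potential pitfall I anticipate is the notational clash between the two theorems (the symbol $\vec{a}_i$ denotes a column of the data matrix in the corollary's setup but a row in Theorem~\ref{thm:ppol}), which must be reconciled by an appropriate transposition and a careful re-indexing before the permanent identity $\per C=\per C^{T}$ is invoked.
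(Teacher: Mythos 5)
Your proposal is correct and follows essentially the same route as the paper's own (much terser) proof: apply Theorem~\ref{thm:ppol} to see that every monomial coefficient of the permanental polynomial is a multiple of a $d\times d$ permanent with columns among $\vec{a}_1,\ldots,\vec{a}_{n-d}\in\widehat{S}$, hence zero, and then invoke Theorem~\ref{thm:zeropoly}. Your explicit attention to the row/column transposition between the two theorems' notations is a welcome clarification that the paper elides.
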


\begin{proof}
	Suppose $\widehat{S}$ is \vgood, so that every $d\times d$ permanent with columns from $\widehat{S}$ vanishes. By Theorem \ref{thm:ppol} every monomial in the permanental polynomial corresponding to the basis $\vec{v}_1, \ldots, \vec{v}_{n-d}$ vanishes, so that by Theorem \ref{thm:zeropoly} $S$ is \vgood.
\end{proof}

\begin{rk}
Example \ref{eg:nontrivial} shows that the sufficient condition in Corollary \ref{crl:sufficient} is not necessary, as $\widehat{S}$ in that example is all of $\F^d$, which cannot be \vgood. 
\end{rk}

Theorem \ref{thm:zeropoly} ``explains'' the characteristic bound $\chr\F > k+1$ of Theorem \ref{thm:trivialpermanull}. We conclude the paper with another proof of this theorem.

\begin{proof}[Second proof of Theorem \ref{thm:trivialpermanull}]
	The cases $n=1, 2$ are straightforward. Suppose the theorem holds for all dimensions up to (but not including) some fixed $n$, and we shall prove it for $n$. Towards that end, let $S\leq \F^{n}$ be a \vgood subspace of codimension $d$ and suppose that $\chr\F>d+1$.

	Permuting the coordinates of $\F^{n}$ if necessary, fix a standard basis for $S$ of the form 
	\[\vec{v}_1 = (\vec{e}_1, \vec{a}_1), \ldots, \vec{v}_{n-d} = (\vec{e}_{n-d}, \vec{a}_{n-d})\]
	(where $\vec{e}_i \in \F^{n-d}$ and $\vec{a}_i\in\F^{d}$). Let $\widehat{S}:= \spn\{\vec{a}_1, \ldots, \vec{a}_{n-d}\} \leq \F^{d}$. Since we may assume $d < n$ (the case $d=n$ being trivial) suffice it to show that $\widehat{S}$ is a \vgood subspace of $\F^d$, since then inductive hypothesis implies that $\vec{a}_1, \ldots, \vec{a}_{n-d}$ share a common zero coordinate, which immediately implies that $\vec{v}_1, \ldots, \vec{v}_{n-d}$ share a common zero coordinate.

	In order to show that $\widehat{S}$ is \vgood, let $\vec{b}_1, \ldots, \vec{b}_d$ be arbitrary vectors of $\widehat{S}$ and write each as a linear combination of the spanning vectors
	\[\vec{b}_i = \sum_{j=1}^{n-d}\lambda_{i, j}\vec{a}_j.\]
	As with the proof of Theorem \ref{thm:trivialpermanull}, in order to show $\per\begin{mx}\vec{b}_1 & \cdots & \vec{b}_d\end{mx} = 0$, by the multilinearity of the permanent, suffice it to show that each permanent of the following form vanishes
	\begin{align*}
		\per\begin{mx}\vec{a}_{j_1} & \cdots & \vec{a}_{j_d}\end{mx} = 0
	\end{align*}
	for every choice of $j_1, \ldots, j_d \in \{1, \ldots, n-d\}$ (where distinct $j_k$ may correspond to the same index). Since $\chr\F > d+1$ Theorem \ref{thm:ppol} implies that the coefficient of each monomial in the permanental polynomial corresponding to the basis $\vec{v}_1, \ldots, \vec{v}_{n-d}$ is nonzero. Since $S$ is assumed to be \vgood, Theorem \ref{thm:zeropoly} implies that this permanental polynomial is the zero polynomial, so it must be that each monomial is zero. Again, by Theorem \ref{thm:ppol} this means that 
	\begin{align*}
		\per\begin{mx}\vec{a}_{j_1} & \cdots & \vec{a}_{j_d}\end{mx} = 0
	\end{align*}
	for every choice of $j_1, \ldots, j_d \in \{1, \ldots, n-d\}$.
\end{proof}

\newpage
\bibliographystyle{abbrv}
\bibliography{particle.bib}
\end{document}